\documentclass[journal,10pt,twocolumn,letter]{IEEEtran}

\usepackage{graphics}
\usepackage{colortbl}
\usepackage{multicol}
\usepackage{lipsum}
\usepackage{color}
\usepackage[cmex10]{amsmath}
\usepackage{amsthm}
\usepackage{amssymb}
\usepackage{mathrsfs}
\usepackage{mathtools}
\usepackage{amsbsy}
\usepackage[colorlinks=true,bookmarks=false,citecolor=blue,urlcolor=blue]{hyperref}
\usepackage{xcolor}
\usepackage{mathrsfs}
\usepackage{setspace}
\usepackage{float}
\usepackage{graphicx}
\usepackage{pstool}
\usepackage{lettrine}
\usepackage{newclude}
\usepackage[normalem]{ulem}
\usepackage{latexsym}
\usepackage{algpseudocode}
\usepackage{gensymb}

\usepackage{balance}

\usepackage{multirow}
\usepackage{amsmath}
\usepackage{amsmath,bm}
\usepackage{wasysym}
\usepackage{mathrsfs}
\usepackage{amsmath,cite,amsfonts,amssymb,color}

\usepackage[linesnumbered,ruled,vlined]{algorithm2e}
\SetKwInput{KwInput}{Input}  
\SetKwInput{KwOutput}{Output}
\SetKw{KwBy}{by}
\makeatletter
\newcommand{\nosemic}{\renewcommand{\@endalgocfline}{\relax}}% Drop semi-colon ;
\newcommand{\dosemic}{\renewcommand{\@endalgocfline}{\algocf@endline}}% Reinstate semi-colon ;
% Indent
% Undent
\let\oldnl\nl% Store \nl in \oldnl
\newcommand{\nonl}{\renewcommand{\nl}{\let\nl\oldnl}}% Remove line number for one line
\makeatother
%%%%%%%%%%%%%%%%%%%%%%%%%%%%%%%%%%%%%%%%%%%%%%%%%%%%%%%%%
\ifCLASSOPTIONcompsoc
\usepackage[caption=false,font=normalsize,labelfon
t=sf,textfont=sf]{subfig}
\else
\usepackage[caption=false,font=footnotesize]{subfig}
\fi
%%%%%%%%%%%%%%%%%%%%%%%%%%%%%%%%%%%%%%%%%%%%%%%%%%%%%%%%%

%%%%%%%%%%%%%%%%%%%%%%%%%%%%%%%%%%%%%%%%%%%%%%%%%%%%%%%%%
% Maijer G-function
\DeclarePairedDelimiterX\MeijerM[3]{\lparen\!}{\rparen}%
{\,#3\delimsize\vert\begin{smallmatrix}#1 \\ #2\end{smallmatrix}}

\newcommand\MeijerG[8][]{%
  G^{\,#2,#3}_{#4,#5}\MeijerM[#1]{#6}{#7}{#8}}

\WithSuffix\newcommand\MeijerG*[7]{%  
  G^{\,#1,#2}_{#3,#4}\MeijerM*{#5}{#6}{#7}}
%%%%%%%%%%%%%%%%%%%%%%%%%%%%%%%%%%%%%%%%%%%%%%%%%%%%%%%%%

\allowdisplaybreaks

\newtheorem{theorem}{Theorem}

\graphicspath{{figures/}}
\allowdisplaybreaks

\hyphenation{op-tical net-works semi-conduc-tor}
\newcommand{\RNum}[1]{\uppercase\expandafter{\romannumeral #1\relax}}

\usepackage{mathtools}

\usepackage{accents}
\newlength{\dhatheight}
\newlength{\dtildeheight}
\newcommand{\doublehat}[1]{%
    \settoheight{\dhatheight}{\ensuremath{\hat{#1}}}%
    \addtolength{\dhatheight}{-0.2ex}%
    \hat{\vphantom{\rule{1pt}{\dhatheight}}%
    \smash{\hat{#1}}}}

\begin{document}

% \title{Triple-Hop Optical Wireless Networks with UAV and Underwater Relaying Systems}
% \title{Performance Analysis of STAR-RIS in\\ Cooperative Indoor VLC Systems}
%\title{Utilizing RIS-Assisted FSO Communication Systems for High-Speed Trains Access Connectivity}
% \title{RIS-Assisted FSO: A Solution to Provide  for Ground-to-Train High-Rate Communications}
\title{High-Speed Trains Access Connectivity Through RIS-Assisted FSO Communications}

\author{\IEEEauthorblockN{Pouya Agheli\thanks{The authors are with the Department of Electrical Engineering, Amirkabir University of Technology (Tehran Polytechnic), Tehran 1591634311, Iran (E-mails: \{pouya.agheli, beyranvand, mj.emadi\}@aut.ac.ir).},
Hamzeh Beyranvand,
and Mohammad Javad Emadi
}} 

%\IEEEauthorblockA{\IEEEauthorrefmark{2}Electrical and Computer Engineering Department, TU Dresden, Dresden, Germany}}

% make the title area
\maketitle

%%====> Abstract <===%%
\begin{abstract}
% Free-space optic (FSO) is a promising solution to provide broadband Internet access for high-speed trains (HSTs). 
% Besides, reconfigurable intelligent surfaces (RIS) have been considered as hardware technology to improve the performances of optical wireless communication systems. 
% Even though multiple direct and relay-assisted FSO access setups have been proposed for HSTs, in this paper we propose a RIS-assisted FSO  system. Our motivation is based on well-proven results indicating that a RIS-assisted optical wireless system, with a large enough number of RIS elements, outperforms a relay-assisted one, thanks to its reconfigurable structure. 
% In this paper, we compute the statistical expressions of the proposed RIS-assisted FSO channels under weak and moderate-to-strong fading conditions. Then, the network's average signal-to-noise ratio and outage probability are formulated based on the assumed fading conditions and two fixed- and dynamic-oriented RIS coverage scenarios. Our results reveal that the proposed access network offers up to around $44\%$ higher data rates and $240\%$ wider coverage area for each FSO base station (FSO-BS) compared to the relay-assisted one. The latter on average reduces  $67\%$ the number of required FSO-BSs for a given distance, which results in fewer handover processes compared to the alternative setups.
Free-space optic (FSO) is a promising solution to provide broadband Internet access for high-speed trains (HSTs). 
Besides, reconfigurable intelligent surfaces (RIS) are considered as hardware technology to improve performance of optical wireless communication systems. 
In this paper, we propose a RIS-assisted FSO system to provide access connectivity for HTSs, as an upgrade for the existing direct and relay-assisted FSO access setups.
Our motivation is mainly based on well-proven results indicating that a RIS-assisted optical wireless system, with a large enough number of RIS elements, outperforms a relay-assisted one thanks to its programmable structure. 
We firstly compute the statistical expressions of the considered RIS-assisted FSO channels under weak and moderate-to-strong fading conditions. Then, the network's average signal-to-noise ratio and outage probability are formulated based on the assumed fading conditions, and for two fixed- and dynamic-oriented RIS coverage scenarios. Our results reveal that the proposed access network offers up to around $44\%$ higher data rates and $240\%$ wider coverage area for each FSO base station (FSO-BS) compared to those of the relay-assisted one. The increase of coverage area, on average, reduces  $67\%$ the number of required FSO-BSs for a given distance, which results in fewer handover processes compared to the alternative setups. Finally, the results are verified through Monte-Carlo simulations.

\end{abstract}
\begin{IEEEkeywords}
Free-space optic, high-speed trains, reconfigurable intelligent surfaces, access network, average signal-to-noise ratio, and outage probability.
\end{IEEEkeywords}

\IEEEpeerreviewmaketitle

%%====> 1. Introduction <===%%
\section{Introduction}
\IEEEPARstart{T}{he} drastic increase of Internet-based applications and video on demand streaming along with the growing interest in high-speed trains (HSTs), has created a popular research topic on providing seamless high-capacity access networks for HSTs. The main objective is to ensure high quality of service (QoS), i.e., high data rates and low latency, for the passengers in an HST similar to generic pedestrian users despite the train's high speed of up to $500\,[\text{km/h}]$ \cite{kim2018comprehensive}. Indeed, we must propose a reliable access network, to not only enables high-rate communications but also results in low handover frequency \cite{fathi2017optimal}. To achieve this goal, multiple radio and optical wireless technologies with different setups have been utilized since the first launch of HSTs \cite{kim2018comprehensive,fathi2017optimal,khallaf2021comprehensive,ai2014challenges,goller1995application,wang2015channel,aguado2008wimax,kowal2010operational,paudel2013modelling,paudel2016laboratory,han2015radiate,taheri2017provisioning,fan2018reducing,fan2017reducing,kaymak2017divergence,mabrouk2019enhancement,mohan2020sectorised}.
The well-known wireless access solutions for HSTs, such as the global system for mobile communications-railway (GSM-R) \cite{goller1995application}, long-term evolution for railways (LTE-R) \cite{wang2015channel}, dual-hop worldwide interoperability for microwave access (WiMAX) \cite{aguado2008wimax}, and dual-hop mobile broadband wireless access (MBWA) \cite{kowal2010operational}, have been deployed in several countries. However, these technologies cannot ensure high QoS for upcoming HSTs in the presence of rapidly time-varying and non-stationary wireless channels, co-channel interference, high path-loss attenuation for millimeter-wave and terahertz bands, high handover frequency, and strong Doppler shifts \cite{khallaf2021comprehensive,ai2014challenges}. 
To address some fundamental challenges of the proposed radio technologies for HSTs, \emph{free-space optic} (FSO) has gained research attention, as discussed in \cite{fathi2017optimal}, \cite{khallaf2021comprehensive}, and
\cite{paudel2013modelling,paudel2016laboratory,han2015radiate,taheri2017provisioning,fan2018reducing,fan2017reducing,kaymak2017divergence,mabrouk2019enhancement,mohan2020sectorised}.
% \cite{kashani2012all,paudel2013modelling,han2015radiate,paudel2016laboratory,kaymak2017divergence,taheri2017provisioning,dabiri2018fso,fan2017reducing,fan2018reducing,mabrouk2019enhancement,khallaf2019uav,mohan2020sectorised,liu2021ris}.

The FSO technology offers high capacity with rapid setup time, easy upgrade, flexibility, freedom from spectrum license regulations, and enhanced security \cite{abdalla2020optical}. However, it comes at the expense of some drawbacks such as pointing error, line-of-sight connectivity requirement, and sensitivity to atmospheric conditions \cite{khalighi2014survey,agheli2021uav,kaushal2016optical,agheli2021design}.
The FSO channel modeling in HSTs is faced with some challenges by reason of unsteady environmental differences, such as tunnels, hills, and dense areas. In this case, an analytical study on ground-to-train channel modeling for typical FSO links has been performed in \cite{paudel2013modelling}.  
An experimental prototype for the demonstration of ground-to-train FSO communications has been also presented in \cite{paudel2016laboratory}. 
Due to the large divergence angles of FSO beams for HST systems and as a result severe distance-dependent beam spreading, each FSO base station (FSO-BS) serves only a narrow region. Thus, we face some challenges such as demanding a large number of FSO-BSs, frequent handovers, and high capital expenditures (CAPEX) to provide seamless coverage, \cite{fathi2017optimal} and \cite{han2015radiate,taheri2017provisioning,fan2018reducing,fan2017reducing}.

Han et al. \cite{han2015radiate} have proposed a radio-over-fiber setup as antenna extender (RADIATE) solution to address the frequent handover and rapid channel variation issues. This solution provides broadband Internet services for HSTs with cellular backhaul networks. 
The authors of \cite{taheri2017provisioning} have utilized the advantages of the RADIATE solution and proposed an FSO-based communication technology for HSTs, referred to as free-space optic utilization in high-speed trains (FOCUS).
The FOCUS solution reduces frequent handovers, prevents handoff congestion, balances the traffic load among transceivers, and handles fast channel variations in HST communications.
In addition, Fan et al. \cite{fan2018reducing} have exploited the dual transceivers scheme in a ground-to-train communications system, where two transceivers of an FSO-BS can point to different directions. Thanks to this scheme, two transceivers at the front- and back-end of HST can cooperate to maintain continuous ground-to-train FSO communications. Therefore, the number of FSO-BSs decreases and the impacts of frequent handovers are alleviated.
The same authors have also suggested a rotating transceiver scheme to mitigate the impacts of handover processes via utilizing steerable FSO transceivers for HST communications, as in \cite{fan2017reducing}. 
Moreover, as an attempt to provide seamless FSO communications for next-generation HSTs, optimal positions of FSO-BSs have been found in \cite{fathi2017optimal} under two single- and dual-wavelength FSO coverage models.

In order to decrease the number of FSO-BSs and handover processes, an interesting solution is to take advantage of relay-assisted transmissions for FSO links. In this regard, multi-hop short-length FSO links would experience better conditions compared to that of the direct transmissions \cite{safari2008relay}. Thus, each FSO-BS with its paired relay can support a wider area. To choose an appropriate relaying model, it has been shown that all-optical relaying offers a favorable trade-off between complexity and performance, and it is considered as a low-complexity solution \cite{kashani2012all}.
Most recently, it has been shown that UAV-based FSO air-relay systems provide wider coverage areas for HSTs compared to conventional ground-relay ones \cite{khallaf2021comprehensive}. The UAV-assisted air-relay systems for HSTs were firstly introduced in \cite{khallaf2019uav}, where a flying UAV relays data between a moving train car and a network gateway. Even though the air-relay systems perform better than the fixed ground-relay systems, it is not affordable and optimal to build a UAV swarm over a moving train to provide Internet access for the train's passengers in non-emergency conditions. 

Recently, \emph{reconfigurable intelligent surfaces} (RISs) have been defined as hardware technology to improve the performances of wireless and optical wireless systems \cite{yuan2021ris,najafi2019intelligentold,najafi2019intelligent,ndjiongue2021analysis}.
The fundamental features and channel modeling of RIS-assisted optical wireless systems have been widely studied in the literature, e.g., \cite{najafi2019intelligentold,najafi2019intelligent}, and \cite{najafi2020physics,ajam2021channel,jia2020ergodic,yang2020mixed,abdelhady2020visible,ndjiongue2021re}, under different physical conditions.
It is well known that we can use two different types of optical RIS structures for deploying a RIS-assisted optical wireless system; \emph{metasurfaces} and \emph{mirror arrays} \cite{abdelhady2020visible}. These structures can be based on integrated seamless surfaces or hundreds of micro-mirrors built with micro-electromechanical systems (MEMS).
In dual-hop FSO systems, the RIS technology can gather dispersed beams and reconfigure them through desired directions, whereas the relay systems can only reflect a portion of the incident FSO beams and disregards the rest.
Indeed, it has been proven that RIS-assisted systems, with a large enough number of elements, outperform the relay ones for wireless transmissions \cite{di2020reconfigurable} and \cite{bjornson2019intelligent}. The results can be extended for optical wireless transmissions.
According to the mentioned factors, the RIS technology can add novelty to the existing direct or relay-assisted FSO access networks for HSTs and be considered as a promising solution to decrease the number of required FSO-BSs for a given distance.
In this paper, we analyze a RIS-assisted FSO system to provide an access network for an HST, which has not been studied to the best of our knowledge. In summary, the key contributions of the paper are as follows.
\begin{itemize}
    \item Deriving the channel probability distribution function (p.d.f) of the proposed RIS-assisted FSO system under weak and moderate-to-strong fading conditions.
    \item Formulating the average signal-to-noise ratio (SNR) and closed-form outage probability expressions under the assumed fading conditions and considering two scenarios for RIS coverage; fixed- and dynamic-oriented ones.
    \item Evaluating the proposed system's performance through Monte-Carlo simulations and analytical results. 
    Furthermore, it is shown that the proposed access network not only increases the achievable data rates at the HST but also boosts the coverage area supported by each FSO-BS and its paired RIS compared to those of the relay-assisted alternatives. Also, our proposed system results in lower number of FSO-BSs, handover processes, and CAPEX. 
    % Evaluating the performance of the proposed system model through Monte-Carlo simulations and analytical results. Our results firstly indicate the closeness between the derived analytical results and Monte-Carlo simulations. Furthermore, we show that the proposed access network not only increases the achievable data rates at the HST but also boosts the coverage area supported by each FSO-BS and its paired RIS compared to those of the relay-assisted alternatives. The latter results in less number of FSO-BSs, handover processes, and CAPEX, as well. 
    \item Presenting a system design framework for practical deployments.
\end{itemize}

\textit{Organization:} Section~\ref{Sec:Sec2} introduces the proposed RIS-assisted FSO system. The network's performance metrics are derived in Section~\ref{Sec:Sec3}. Numerical results and discussions in addition to the system design framework are presented in Section~\ref{Sec:Sec4}. Finally, the paper is concluded in Section~\ref{Sec:Sec5}.

\textit{Notation:} $\Gamma(.)$ and $\Gamma(s,x)\!=\!\int_{x}^{\infty}t^{s-1} e^{-t} dt$ sequentially denote the Gamma and upper incomplete Gamma functions. $K_\alpha(.)$ is the modified Bessel function of the second kind. Also, $\operatorname{erf}(x)\!=\!\frac{2}{\sqrt{\pi}}\int_{0}^{x}e^{-t^2}dt$ and $\operatorname{erfc}(x)\!=\!1\!-\operatorname{erf}(x)$ indicate the error and complementary error functions, respectively. Besides, $\mathbb{E}\{\cdot\}$ is the statistical expectation, and $z \!\sim\! \mathcal{N}(m,\sigma^2)$ shows real-valued symmetric Gaussian random variable with mean $m$ and variance $\sigma^2$. Moreover, $\mathbb{I}(\mathcal{C})$ denotes an indicator function, where $\mathbb{I}(\mathcal{C})\!=\!1$ if the condition $\mathcal{C}$ is satisfied, and $\mathbb{I}(\mathcal{C})\!=\!0$, otherwise. Finally, $\MeijerG*{m}{n}{p}{q}{a_1,a_2,...,a_p}{b_1,b_2,...,b_q}{z}$ presents the Meijer's G-function.

%%====> 2. System Model <===%%
\section{System Model}\label{Sec:Sec2}
\begin{figure*}[h!]
    \centering
    \hspace{-0.9cm}
    \pstool[scale=0.9]{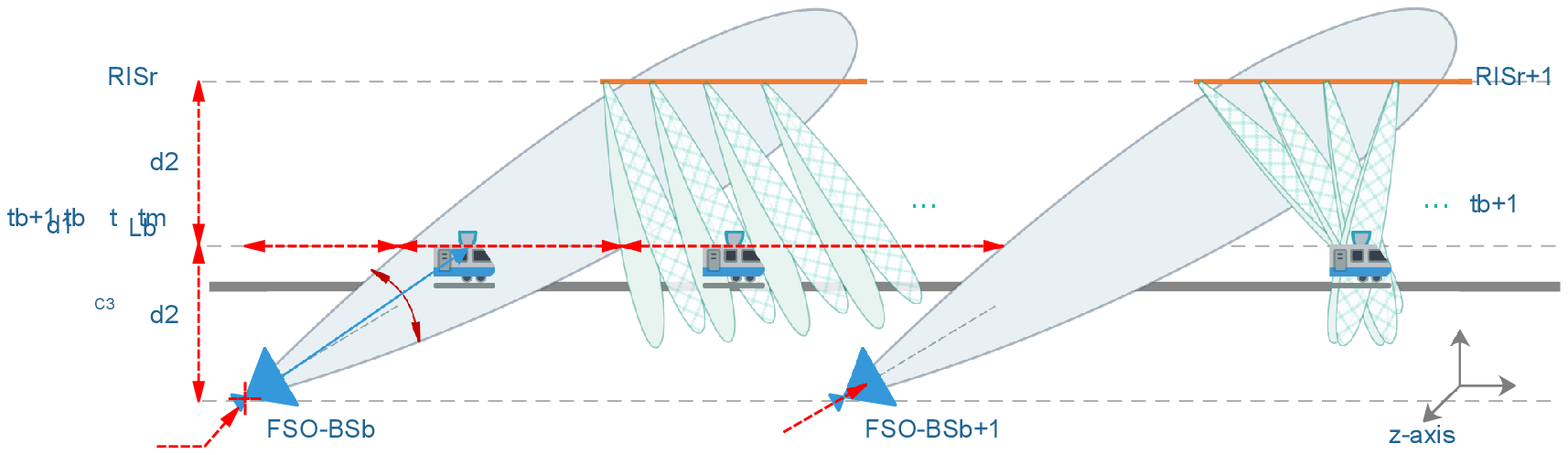}{
    \psfrag{x-axis}{\hspace{0.0cm}\scriptsize x-axis}
    \psfrag{y-axis}{\hspace{0.0cm}\scriptsize y-axis}
    \psfrag{z-axis}{\hspace{0.0cm}\scriptsize z-axis}
    \psfrag{RISr}{\hspace{4.15cm}\small $\text{RIS}_r$}
    \psfrag{RISr+1}{\hspace{-3.9cm}\small $\text{RIS}_{r+1}$}
    \psfrag{FSO-BSb}{\hspace{-0.22cm}\small $\text{FSO-BS}_{b}$}
    \psfrag{FSO-BSb+1}{\hspace{-0.22cm}\small $\text{FSO-BS}_{b+1}$}
    \psfrag{PD}{\hspace{-0.2cm}\small $\text{PD}_\text{BS}$}
    \psfrag{Incident}{\hspace{0.1cm}\small Direct}
    \psfrag{FSObeam}{\hspace{-0.05cm}\small FSO beam}
    \psfrag{A}{\hspace{-6.45cm}\small RIS-assisted}
    \psfrag{B}{\hspace{-6.42cm}\small FSO beams}
    \psfrag{d2}{\hspace{0.0cm}\small $L_d$}
    \psfrag{d1}{\hspace{2.6cm}\small $L_0$}
    \psfrag{Lb}{\hspace{3.65cm}\small $L_b$}
    \psfrag{Lm}{\hspace{-7.3cm}\small $L_m$}
    \psfrag{Light}{\hspace{0.165cm}\small Light}
    \psfrag{Source}{\hspace{0.1cm}\small source}
    \psfrag{FSO}{\hspace{-0.5cm}\small FSO beam}
    \psfrag{Expander}{\hspace{-0.375cm}\small expander}
    \psfrag{C1}{\hspace{2.7cm}\scriptsize }
    \psfrag{C2}{\hspace{3.01cm}\scriptsize }
    \psfrag{C3}{\hspace{3.16cm}\scriptsize $2\Psi$}
    \psfrag{FOR}{\hspace{-9.1cm}\small FOR strategy}
    \psfrag{DOR}{\hspace{-2.43cm}\small DOR strategy}
    \psfrag{tb}{\hspace{3.3cm}\small $t_b$}
    \psfrag{tb+1}{\hspace{9.5cm}\small $t_{b+1}$}
    \psfrag{t}{\hspace{3.55cm}\small $t$}
    \psfrag{tm}{\hspace{4.7cm}\small $t_{b_m}$}
    }
    % \vspace{-0.25cm}
    \caption{The RIS-assisted FSO access network serving an HST system.}
    \label{Fig:Fig.1}
\end{figure*}
\begin{figure}[h!]
    \centering
    \pstool[scale=1.1]{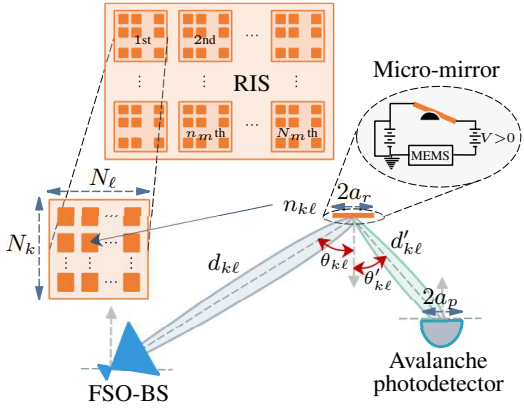}{
    \psfrag{PD1}{\hspace{-0.5cm}\small Avalanche}
    \psfrag{PD2}{\hspace{-0.7cm}\small photodetector}
    \psfrag{FSO-BS}{\hspace{0.23cm}\small FSO-BS}
    \psfrag{Nk}{\hspace{-0.025cm}\small $N_k$}
    \psfrag{Nl}{\hspace{-0.005cm}\small $N_\ell$}
    \psfrag{d1}{\hspace{-3.7cm}\small $d_{k\ell}$}
    \psfrag{d2}{\hspace{-0.68cm}\small $d_{k\ell}^\prime$}
    \psfrag{Elementlk}{\hspace{-1.72cm}\small $n_{k\ell}$}
    \psfrag{2ar}{\hspace{-0.04cm}\small $2a_r$}
    \psfrag{2ap}{\hspace{-0.05cm}\small $2a_p$}
    \psfrag{Tkli}{\hspace{-1.6cm}\scriptsize $\theta_{k\ell}$}
    \psfrag{Tklr}{\hspace{-1.2cm}\scriptsize $\theta_{k\ell}^\prime$}
    \psfrag{RISr}{\hspace{2.55cm}\small $\text{RIS}$}
    \psfrag{1}{\hspace{1.58cm}\tiny $1$st}
    \psfrag{2}{\hspace{1.78cm}\tiny $2$nd}
    \psfrag{Nr}{\hspace{3.21cm}\tiny $N_m$th}
    \psfrag{nr}{\hspace{2.25cm}\tiny $n_m$th}
    \psfrag{V2}{\hspace{-0.085cm}\tiny $V\!\!>\!\!0$}
    \psfrag{ME}{\hspace{-0.13cm}\tiny MEMS}
    \psfrag{FSM}{\hspace{-0.55cm}\small Micro-mirror}
    }
    \caption{The FSO beam's reflection through a RIS-assisted structure.}
    \label{Fig:Fig.2}
\end{figure}
We assume a RIS-assisted FSO access network in which $B$ FSO-BSs and $R$ RISs cooperatively serve an HST system, c.f. Fig.~\ref{Fig:Fig.1}.
In this setup, at each time, the HST is served via either a \emph{direct} FSO beam transmitted from an FSO-BS or \emph{RIS-assisted} FSO beams reflected from a RIS. Each train car is equipped with an avalanche photodetector placed on its roof, which delivers the gathered data from outside to an intra-train network. In Fig.~\ref{Fig:Fig.1}, $L_b$, for $b\!=\!1,2,...,B$, and $L_m$ denote the lengths of the regions covered by direct and RIS-assisted FSO beams, respectively. In addition, $L_d$ and $L_0$ indicate the vertical and horizontal distances of every FSO-BS from the track, respectively. Furthermore, the FSO beam expander at each FSO-BS adjusts its divergence angle to cover the whole area of the paired RIS. Each RIS consists of $N_m$ cells, as shown in Fig.~\ref{Fig:Fig.2}, where each cell provides a concentrated beam by its $N_k\!\times\!N_\ell$ micro-mirror elements. Given that, we consider two coverage strategies based on the RIS-assisted FSO beams; fixed-oriented reflection (FOR) and dynamic-oriented reflection (DOR). In the FOR scenario, the coverage area $L_m$ is divided into $N_m$ one-dimensional grids, where each one is independently covered by a single RIS cell's beam under a fixed orientation. However, in the DOR case, all cells dynamically point their beams to the HST's detector and all them serve the HST simultaneously. In this case, train tracking is required to contrast probable FSO misalignments. As shown in Fig.~\ref{Fig:Fig.1}, $\text{RIS}_{r}$ paired with $\text{FSO-BS}_b$ and $\text{RIS}_{r+1}$ paired with $\text{FSO-BS}_{b+1}$, for $r\!=\!1,2,...,R$, where $\text{RIS}_{r}$ and $\text{RIS}_{r+1}$ work based on the FOR and DOR strategies, respectively.

Before analyzing the network's performance, in what follows, we model the RIS-assisted and direct FSO channels, then we present the received signals at the HST's detector.

% 1. FSO-BS coverage area 2. RIS coverage area
% 3. Reduce handover frequency
% 4. Two type of RIS coverage 
% Therefore, by considering coverage length of $L_{theta}$ for each FSO-BS in the case of FAN shown in Fig. 2, the total coverage length of the shown architecture is $L_{theta}\times N\times M$. Note that HST needs to handover between two neighbour FSO-BSs. In the proposed RAFSO scheme, by using relay node we increase distance between adjacent FSO-BS, and accordingly reduce the handover frequency. In the following we describe the details of the proposed fiber backhaul scenarios.
\subsection{RIS-Assisted FSO Channel Model}
An end-to-end FSO channel between any FSO-BS and the HST's detector, which is assisted via the $n_{k\ell}$th element of the RIS, as depicted in Fig.~\ref{Fig:Fig.2}, is modeled by $h_{k\ell}\!=\!h_{p,k\ell} h_{t,k\ell} h_{g,k\ell}$. Here, $h_{p,k\ell}$, $h_{t,k\ell}$, and $h_{g,k\ell}$ denote the path-loss, turbulence-induced fading, and the sum of geometric and misalignment losses (GML), respectively, which are discussed as follows.

\subsubsection{Path-Loss} Under the Beer's law, $h_{p,k\ell}$ is given by
\begin{equation}\label{Eq:Eq1}
    h_{p,k\ell} = 10^{-\gamma d_{e,k\ell}},
\end{equation}
where $\gamma$ and $d_{e,k\ell} \!=\! d_{k\ell} \!+\! d_{k\ell}^\prime$ present the attenuation factor and end-to-end distance, respectively.

\subsubsection{Turbulence-Induced Fading} The distribution of $h_{t,k\ell}$ is log-normal, i.e., $h_{t,k\ell}\!\sim\!\operatorname{LN}(\mu_{k\ell},\sigma_{k\ell}^2)$, or Gamma-Gamma, i.e., $h_{t,k\ell}\!\sim\!\operatorname{GG}(\alpha_{k\ell},\beta_{k\ell})$, under weak or moderate-to-strong turbulence condition, sequentially. Hence, we have \cite{khalighi2014survey,agheli2021uav}
\begin{align}\label{Eq:Eq2} \nonumber
&\!f_{h_{t,k\ell}}(h_{t,k\ell}) \!=\!\\ 
   &\!\begin{cases}
     \!\dfrac{1}{h_{t,k\ell} \sqrt{8 \pi \sigma_{k\ell}^2}} \operatorname{exp} \!\left(\!\!- \dfrac{\left(\ln(h_{t,k\ell}) - 2\mu_{k\ell}\right)^2}{8 \sigma_{k\ell}^2} \!\right)\!\!,~{\!{\text{\small for LN}}},\\
     \!\dfrac{2(\!\alpha_{k\ell}\beta_{k\ell}h_{t,k\ell}\!)^{\!\frac{\alpha_{k\ell}\!+\!\beta_{k\ell}}{2}}}{\Gamma(\alpha_{k\ell})\Gamma(\beta_{k\ell})h_{t,k\ell}} K_{\alpha_{k\ell}-\beta_{k\ell}}\!\Big(\!2 \sqrt{\!\alpha_{k\ell}\beta_{k\ell}h_{t,k\ell}}\Big)\!,~{\!\text{\small for GG}},
   \end{cases} 
\end{align}
where $\mu_{k\ell} \!=\! -\sigma_{k\ell}^2$, $\sigma_{k\ell}^2 \!=\! 0.307 C_n^2 k^{7/6} d_{e,k\ell}^{11/6}$, $C_n^2$ represents the refraction structure's index, $k\!=\!{2 \pi}/\lambda$ is the optical wave number, and $\lambda$ denotes the wavelength. Also, we have \cite{khalighi2014survey}
\begin{subequations}\label{Eq:Eq3}
\begin{align}
    &\alpha_{k\ell} \!=\! \bigg[\! \operatorname{exp} \!\left( \dfrac{1.96\sigma_{k\ell}^2}{(1\!+\!4.44\sigma_{k\ell}^{12/5})^{7/6}}\right) \!-\! 1 \bigg]^{\!-1},\\
    &\beta_{k\ell} \!=\! \bigg[\! \operatorname{exp} \!\left( \dfrac{2.04\sigma_{k\ell}^2}{(1\!+\!2.76\sigma_{k\ell}^{12/5})^{5/6}}\right) \!-\! 1 \bigg]^{\!-1}.
\end{align}
\end{subequations}
\subsubsection{GML} Given the positions and orientations of the assumed FSO-BS, RIS, and HST's detector, the conditional GML is given by the following p.d.f \cite{najafi2019intelligent}
\begin{equation}\label{Eq:Eq4}
    f_{h_{g,k\ell}}(h_{g,k\ell}) \!=\! \dfrac{\sqrt{\varpi_{k\ell}}}{2\sqrt{\pi}h_{0,k\ell}} \!\left[ \ln\!\left(\! \dfrac{h_{0,k\ell}}{h_{g,k\ell}} \!\right) \!\right]^{\!-1/2} \!\!\left(\! \dfrac{h_{g,k\ell}}{h_{0,k\ell}} \!\right)^{\!\!\varpi_{k\ell\!-\!1}},
\end{equation}
if $0\!\leq\!h_{g,k\ell}\!\leq\!h_{0,k\ell}$, where $h_{0,k\ell}\!=\! \operatorname{erf}(v
_{k\ell})$, $v_{k\ell} \!=\! {\sqrt{2} \cos(\theta_{k\ell}^\prime)  }{w^{\!-1}(d_{e,k\ell},\hat{w}_0)} a_p$, and we have
\begin{subequations}
\begin{align}\label{Eq:Eq5a}
    &w(d_{e,k\ell},\hat{w}_0) \!=\! \hat{w}_0 \sqrt{1 \!+\! \left(\! \dfrac{\lambda d_{e,k\ell}}{\pi \hat{w}_0^2} \!\right)^{\!\!2}} ,\\\label{Eq:Eq5b}
    &\varpi_{k\ell} \!=\! \dfrac{\sqrt{\pi}}{8} \dfrac{h_{0,k\ell}w^2(d_{e,k\ell},\hat{w}_0)}{v_{k\ell} \operatorname{exp}(-v_{k\ell}^2) \cos^2(\theta_{k\ell}^\prime) \delta_m^2}, \\ \label{Eq:Eq5c}
    &\delta_{m}^2 \!=\! \dfrac{1}{\cos^2(\theta_{k\ell}^\prime)} \!\left(\!\dfrac{\cos^2(\theta_{k\ell}^\prime)}{\cos^2(\theta_{k\ell})} \delta_{s}^{2}\!+\!\dfrac{\sin^2(\theta_{k\ell}\!+\!\theta_{k\ell}^\prime)}{\cos^2(\theta_{k\ell})} \delta_{r}^{2}\!+\!\delta_{l}^{2}\!\right)\!.
\end{align}
\end{subequations}
In (\ref{Eq:Eq5c}), $\delta_s^2$, $\delta_r^2$, and $\delta_l^2$ denote variances of building sway fluctuations at the FSO-BS, RIS, and detector, respectively~\cite{najafi2019intelligent}.
By using (\ref{Eq:Eq1}), (\ref{Eq:Eq2}), and (\ref{Eq:Eq4}), the p.d.fs of $h_{k\ell}$ for the two turbulence conditions are computed in the subsequent theorems.
\begin{theorem}\label{The:The1} The p.d.f of $h_{k\ell}$ under the weak turbulence is
\begin{align}\label{Eq:Eq6} \nonumber
    f_{h_{k\ell}}(h_{k\ell})&= \dfrac{\sqrt{\varpi_{k\ell}}}{8\sqrt{\pi}} \left(\! \dfrac{h_{k\ell}}{\mathcal{C}_a} \!\right)^{\!\!\varpi_{k\ell}} \!\!\operatorname{erfc} \!\left(\! \dfrac{\ln\!\left(\!\frac{h_{k\ell}}{\mathcal{C}_a} \!\right)\!+\!\mathcal{C}_b}{\sqrt{8 \sigma_{k\ell}^2}} \!\right)\\ \nonumber
    &~~\times\! \Bigg\{\!\! \left[\dfrac{5\!-\!\ln(\mathcal{C}_a)}{h_{k\ell}^{\varpi_{k\ell}}}\!+\!\dfrac{\sqrt{8\sigma_{k\ell}^2}}{\sqrt{\pi}}\right] \\ \nonumber
    &~~\times\! \operatorname{exp}\!\left(\!\dfrac{\big(\!\ln(h_{k\ell})\!+\!\mathcal{C}_b\big)^2 \!-\! \big(\!\ln(h_{k\ell})\!+\!2\sigma_{k\ell}^2\big)^2   }{8\sigma_{k\ell}^2} \!\right)\\
    &~~+\! \big(\!\ln(h_{k\ell})\!+\!\mathcal{C}_b\big) \operatorname{exp}\!\Big(\!2\sigma_{k\ell}^2\varpi_{k\ell}(1\!+\!\varpi_{k\ell})\!\Big) \!\!\Bigg\},
\end{align}
where $\mathcal{C}_a\!=\!h_{0,k\ell}h_{p,k\ell}$, and $\mathcal{C}_b\!=\!2\sigma_{k\ell}^2(1\!+\!2\varpi_{k\ell})$.
\end{theorem}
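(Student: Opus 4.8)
The plan is to treat $h_{k\ell}=h_{p,k\ell}h_{t,k\ell}h_{g,k\ell}$ as a scaled product of two \emph{independent} random variables, since the path loss $h_{p,k\ell}$ in \eqref{Eq:Eq1} is deterministic while $h_{t,k\ell}$ and $h_{g,k\ell}$ carry the densities in \eqref{Eq:Eq2} (log-normal case) and \eqref{Eq:Eq4}. First I would write the standard product-density formula by conditioning on the turbulence term: for fixed $h_{t,k\ell}$ we have $h_{g,k\ell}=h_{k\ell}/(h_{p,k\ell}h_{t,k\ell})$, so
\[
f_{h_{k\ell}}(h_{k\ell})=\int f_{h_{g,k\ell}}\!\left(\frac{h_{k\ell}}{h_{p,k\ell}h_{t,k\ell}}\right)\frac{f_{h_{t,k\ell}}(h_{t,k\ell})}{h_{p,k\ell}h_{t,k\ell}}\,dh_{t,k\ell}.
\]
Because \eqref{Eq:Eq4} is supported only on $0\le h_{g,k\ell}\le h_{0,k\ell}$, the constraint $h_{k\ell}/(h_{p,k\ell}h_{t,k\ell})\le h_{0,k\ell}$ turns into a lower integration limit $h_{t,k\ell}\ge h_{k\ell}/\mathcal{C}_a$, with $\mathcal{C}_a=h_{0,k\ell}h_{p,k\ell}$ exactly as defined in the statement.

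Next I would linearise the product with the substitution $v=\ln\!\big(h_{t,k\ell}\mathcal{C}_a/h_{k\ell}\big)$, which maps the integration range to $v\in[0,\infty)$. Using $\mu_{k\ell}=-\sigma_{k\ell}^2$ and $\mathcal{C}_a=h_{0,k\ell}h_{p,k\ell}$, the algebraic pieces collapse neatly: the factor $[\ln(h_{0,k\ell}/h_{g,k\ell})]^{-1/2}$ becomes $v^{-1/2}$, the power $(h_{g,k\ell}/h_{0,k\ell})^{\varpi_{k\ell}-1}$ together with the Jacobians becomes $e^{-\varpi_{k\ell}v}$ (the $\mathcal{C}_a/(h_{0,k\ell}h_{p,k\ell})$ factor cancelling to one), and the Gaussian exponent of the log-normal becomes $-(v+B)^2/(8\sigma_{k\ell}^2)$ with $B=\ln(h_{k\ell}/\mathcal{C}_a)+2\sigma_{k\ell}^2$. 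This reduces the whole problem, up to an explicit prefactor, to the single integral $I=\int_0^\infty v^{-1/2}\exp\!\big(-\varpi_{k\ell}v-(v+B)^2/(8\sigma_{k\ell}^2)\big)\,dv$.

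Then I would complete the square, writing $-\varpi_{k\ell}v-(v+B)^2/(8\sigma_{k\ell}^2)=-\big(v+C\big)^2/(8\sigma_{k\ell}^2)+\varpi_{k\ell}B+2\sigma_{k\ell}^2\varpi_{k\ell}^2$ with $C=B+4\sigma_{k\ell}^2\varpi_{k\ell}$. The constant factor pulls out as $(h_{k\ell}/\mathcal{C}_a)^{\varpi_{k\ell}}\exp\!\big(2\sigma_{k\ell}^2\varpi_{k\ell}(1+\varpi_{k\ell})\big)$, which already reproduces the leading power and one of the exponential factors in \eqref{Eq:Eq6}. For the residual $\int_0^\infty v^{-1/2}\exp\!\big(-(v+C)^2/(8\sigma_{k\ell}^2)\big)\,dv$ I would integrate by parts taking the Gaussian as the part to be integrated; its antiderivative is proportional to $\operatorname{erfc}\!\big((v+C)/\sqrt{8\sigma_{k\ell}^2}\big)$, and its value at the lower limit $v=0$ is what introduces the factor $\operatorname{erfc}\!\big((\ln(h_{k\ell}/\mathcal{C}_a)+\mathcal{C}_b)/\sqrt{8\sigma_{k\ell}^2}\big)$, since $C=\ln(h_{k\ell}/\mathcal{C}_a)+\mathcal{C}_b$ with $\mathcal{C}_b=2\sigma_{k\ell}^2(1+2\varpi_{k\ell})$. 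Collecting the boundary contribution with the leftover integral is meant to generate the bracketed polynomial-and-$\exp$ correction terms.

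The hard part will be exactly this last integral: the $v^{-1/2}$ weight makes the integrand singular at $v=0$, so the integration by parts produces a divergent boundary term that must cancel against an equally divergent residual integral. I would handle this by integrating over $[\epsilon,\infty)$ and letting $\epsilon\to0$, retaining the finite remainder; closing it in the \emph{elementary} form of \eqref{Eq:Eq6} (the $\tfrac{5-\ln(\mathcal{C}_a)}{h_{k\ell}^{\varpi_{k\ell}}}$ piece, the $\sqrt{8\sigma_{k\ell}^2}/\sqrt{\pi}$ piece, and the mixed-Gaussian exponential) appears to require a short-argument series/approximation of the $v^{-1/2}$-weighted integrand rather than an exact evaluation, which would otherwise yield a parabolic-cylinder or incomplete-gamma expression. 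Verifying that this approximation is what produces the stated closed form, and tracking the multiplicative constants so that the final prefactor reduces to $\sqrt{\varpi_{k\ell}}/(8\sqrt{\pi})$, is the step I would scrutinise most carefully, and I would cross-check the resulting density against Monte-Carlo sampling of the product $h_{p,k\ell}h_{t,k\ell}h_{g,k\ell}$.
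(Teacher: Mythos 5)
Your setup coincides exactly with the paper's Appendix~\ref{App:App1}: condition on $h_{t,k\ell}$, use the product-density formula, and let the support $0\le h_{g,k\ell}\le h_{0,k\ell}$ impose the lower limit $h_{t,k\ell}\ge h_{k\ell}/\mathcal{C}_a$. Your subsequent reduction is also sound and goes \emph{further} than what the paper displays: the substitution $v=\ln(\mathcal{C}_a h_{t,k\ell}/h_{k\ell})$, the identity $e^{\varpi_{k\ell}B+2\sigma_{k\ell}^2\varpi_{k\ell}^2}=(h_{k\ell}/\mathcal{C}_a)^{\varpi_{k\ell}}e^{2\sigma_{k\ell}^2\varpi_{k\ell}(1+\varpi_{k\ell})}$, and the shift $C=\ln(h_{k\ell}/\mathcal{C}_a)+\mathcal{C}_b$ correctly account for the leading power, one of the exponential factors, and the argument of the $\operatorname{erfc}$ in (\ref{Eq:Eq6}) --- all of which the paper compresses into ``some mathematical manipulations.''

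Two caveats. First, your integration by parts is oriented the wrong way: taking the Gaussian as the integrated part produces a boundary term proportional to $v^{-1/2}\operatorname{erfc}\!\big((v+C)/\sqrt{8\sigma_{k\ell}^2}\big)$ that diverges at $v=0$; integrating $v^{-1/2}$ instead (antiderivative $2v^{1/2}$) removes the singular boundary term but leaves $v^{1/2}$- and $v^{3/2}$-weighted Gaussians, so no choice of direction closes the integral in elementary terms. Second --- and this is the substantive point --- you are right that $\int_0^\infty v^{-1/2}e^{-(v+C)^2/(8\sigma_{k\ell}^2)}\,dv$ is a parabolic-cylinder--type object with no exact $\operatorname{erfc}$ representation; the pure-$\operatorname{erfc}$ structure of (\ref{Eq:Eq6}) (and the peculiar $5-\ln(\mathcal{C}_a)$ coefficient) can only come from an approximation of the $[\ln(\cdot)]^{-1/2}$ weight that the paper never states (its cited reference \cite{farid2007outage} has no such weight, which is why its analogue (\ref{Eq:Eq10}) \emph{is} exact). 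So your proposal reaches exactly as far as the paper's own argument does; the step you flag as unverified is precisely the step the paper omits, and your plan to cross-check against Monte-Carlo sampling is the appropriate way to validate whichever approximation is being used.
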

\begin{proof}
See Appendix \ref{App:App1}.
\end{proof}

\begin{theorem}\label{The:The2} The p.d.f of $h_{k\ell}$ under the moderate-to-strong turbulence becomes
\begin{align}\label{Eq:Eq7} \nonumber
    f_{h_{k\ell}}(h_{k\ell})
    &\!=\! \dfrac{\sqrt{\varpi_{k\ell} {\zeta}}}{2\sqrt{\pi}\Gamma(\alpha_{k\ell})\Gamma(\beta_{k\ell})} \dfrac{1}{h_{k\ell}} \left(\!\dfrac{\alpha_{k\ell}\beta_{k\ell}h_{k\ell}}{\mathcal{C}_a}\!\right)^{\!\!\mathcal{C}_c} \\
    &~~\times\!
    \MeijerG*{3}{0}{1}{3}{1+\varpi_{k\ell}}{1+\mathcal{C}_c, \alpha_{k\ell}, \beta_{k\ell}\!}{\!\frac{\alpha_{k\ell}\beta_{k\ell}h_{k\ell}}{\mathcal{C}_a}},
\end{align}
where $\mathcal{C}_c\!=\!1/{2\zeta}\!+\!\varpi_{k\ell}\!-\!1$.
\end{theorem}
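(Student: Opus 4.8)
The plan is to exploit the fact that $h_{k\ell}=h_{p,k\ell}h_{t,k\ell}h_{g,k\ell}$ is the product of a deterministic path-loss $h_{p,k\ell}$ and two independent random variables: the Gamma--Gamma turbulence $h_{t,k\ell}$ with density (\ref{Eq:Eq2}) and the GML $h_{g,k\ell}$ with density (\ref{Eq:Eq4}), supported on $[0,h_{0,k\ell}]$. Conditioning on the turbulence term and writing $h_{g,k\ell}=h_{k\ell}/(h_{p,k\ell}h_{t,k\ell})$ gives $f_{h_{k\ell}}(h_{k\ell})=\int \frac{1}{h_{p,k\ell}h_{t,k\ell}}f_{h_{g,k\ell}}\!\big(\frac{h_{k\ell}}{h_{p,k\ell}h_{t,k\ell}}\big)f_{h_{t,k\ell}}(h_{t,k\ell})\,dh_{t,k\ell}$, where the support constraint $h_{g,k\ell}\le h_{0,k\ell}$ becomes the lower integration limit $h_{t,k\ell}\ge h_{k\ell}/\mathcal{C}_a$ with $\mathcal{C}_a=h_{0,k\ell}h_{p,k\ell}$. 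This is the standard structure behind combined turbulence-plus-pointing-error FSO densities, and I would organize the derivation around the Mellin transform, so that the three independent factors decouple into a product of elementary Mellin images.

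Concretely, first I would record the Mellin transform $\mathbb{E}\{h_{t,k\ell}^{s-1}\}$ of the Gamma--Gamma law, namely the familiar ratio $\Gamma(\alpha_{k\ell}+s-1)\Gamma(\beta_{k\ell}+s-1)/[\Gamma(\alpha_{k\ell})\Gamma(\beta_{k\ell})(\alpha_{k\ell}\beta_{k\ell})^{s-1}]$, obtained from the standard integral $\int_0^\infty x^{\mu-1}K_\nu(2\sqrt{ax})\,dx$. Second, I would compute the Mellin transform of the GML density (\ref{Eq:Eq4}); the substitution $t=\ln(h_{0,k\ell}/h_{g,k\ell})$ reduces it to $\int_0^\infty t^{-1/2}e^{-(s+\varpi_{k\ell}-1)t}\,dt$, i.e.\ to the half-integer power $(s+\varpi_{k\ell}-1)^{-1/2}$ up to the prefactor $\tfrac{\sqrt{\varpi_{k\ell}}}{2}h_{0,k\ell}^{s-1}$. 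Multiplying the three images (including the scalar $h_{p,k\ell}^{s-1}$) yields the exact Mellin transform of $h_{k\ell}$, and inverse-transforming through the Mellin--Barnes integral is what will produce the Meijer's $G$-function in (\ref{Eq:Eq7}).

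The hard part is that the factor $(s+\varpi_{k\ell}-1)^{-1/2}$ is not a ratio of Gamma functions, so the inverse transform is not literally a Meijer $G$. I would resolve this, as one must for this RIS-specific GML, by approximating the half-integer power by a Gamma ratio, which is equivalent to replacing (\ref{Eq:Eq4}) by a tractable pointing-error--type surrogate. This is where the auxiliary constant $\zeta$ enters: it is the fitting parameter of that approximation, and it shifts the effective GML exponent from $\varpi_{k\ell}-1$ to $\mathcal{C}_c=\tfrac{1}{2\zeta}+\varpi_{k\ell}-1$, while the prefactor absorbs the $\sqrt{\zeta}$ appearing in (\ref{Eq:Eq7}). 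I expect this approximation, together with the bookkeeping of which parameters land in the upper list ($1+\varpi_{k\ell}$) versus the lower list ($1+\mathcal{C}_c,\alpha_{k\ell},\beta_{k\ell}$) of the $G^{3,0}_{1,3}$, to be the main obstacle and the step most prone to slips.

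Finally, with the half-power replaced by a Gamma ratio, the product of Mellin images becomes a pure quotient of Gamma functions of $s$, and I would read off the inverse transform directly as a Meijer $G$ from its Mellin--Barnes definition, folding the powers of $\alpha_{k\ell}\beta_{k\ell}/\mathcal{C}_a$ into the argument and the leading factor $z^{\mathcal{C}_c}$ via the identity that multiplying a Meijer $G$ by $z^{\sigma}$ displaces all of its parameters by $\sigma$. Matching the residual constants against $\tfrac{\sqrt{\varpi_{k\ell}\zeta}}{2\sqrt{\pi}\Gamma(\alpha_{k\ell})\Gamma(\beta_{k\ell})}$ then completes the identification with (\ref{Eq:Eq7}). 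As a sanity check I would rerun the identical conditioning argument with the log-normal law of (\ref{Eq:Eq2}) in place of Gamma--Gamma: there the $[\ln(h_{0,k\ell}/h_{g,k\ell})]^{-1/2}$ factor combines with a Gaussian integrand and is integrable in closed form, returning the $\operatorname{erfc}$ expression of Theorem~\ref{The:The1} with no auxiliary constant, which is precisely why $\zeta$ is forced only in the Gamma--Gamma (Bessel) case.
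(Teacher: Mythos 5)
Your setup coincides with the paper's: Appendix~\ref{App:App2} also conditions on the turbulence term, writes $f_{h_{k\ell}}(h_{k\ell})$ as an integral over $h_{t,k\ell}\ge h_{k\ell}/\mathcal{C}_a$ of the GML density against the Gamma--Gamma density, and converts the Bessel kernel into a $G^{2,0}_{0,2}$. Your Mellin images of the two random factors are correct, your reading of $\zeta$ as the parameter of an approximation that shifts $\varpi_{k\ell}-1$ to $\mathcal{C}_c$ is the right interpretation, and your log-normal sanity check matches Appendix~\ref{App:App1}. The divergence is in how the obstruction is removed, and there your plan has a genuine gap rather than merely a different route.

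First, the step you yourself flag as the hard part is left unspecified: you never say which Gamma ratio replaces $(s+\varpi_{k\ell}-1)^{-1/2}$ or how $\zeta$ parameterizes it, and the natural candidate $\Gamma(s+\varpi_{k\ell}-1)/\Gamma(s+\varpi_{k\ell}-\tfrac12)$ has exponent gap $\tfrac12$, whereas the gap between the upper parameter $1+\varpi_{k\ell}$ and the lower parameter $1+\mathcal{C}_c$ in (\ref{Eq:Eq7}) is $1-1/(2\zeta)$. Second, and more fundamentally, no surrogate for the GML marginal alone can land on (\ref{Eq:Eq7}): absorbing the prefactor $(\alpha_{k\ell}\beta_{k\ell}h_{k\ell}/\mathcal{C}_a)^{\mathcal{C}_c}$ into the $G^{3,0}_{1,3}$ — the very shift identity you invoke — displaces all four parameters by $\mathcal{C}_c$, so the Mellin transform of the right-hand side of (\ref{Eq:Eq7}) is proportional to $\Gamma(s-1+\alpha_{k\ell}+\mathcal{C}_c)\,\Gamma(s-1+\beta_{k\ell}+\mathcal{C}_c)\,\Gamma(s+2\mathcal{C}_c)/\Gamma(s+\varpi_{k\ell}+\mathcal{C}_c)$, with the Gamma--Gamma factors themselves shifted by $\mathcal{C}_c$, whereas your factorization preserves them exactly as $\Gamma(s-1+\alpha_{k\ell})\Gamma(s-1+\beta_{k\ell})$. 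The reason is that the paper's approximation $\ln(x)\cong\zeta x^{1/\zeta}-\zeta$ is applied to $\ln(\mathcal{C}_a h_{t,k\ell}/h_{k\ell})$ inside the \emph{joint} integrand, coupling the surrogate to the turbulence variable; it turns the log kernel into the Euler-type kernel $\big(h_{t,k\ell}-\alpha_{k\ell}\beta_{k\ell}h_{k\ell}/\mathcal{C}_a\big)^{-1/(2\zeta)}$, and the resulting integral is then evaluated by a tabulated Meijer-$G$ formula following \cite[Eqs.~(5)--(6)]{alheadary2017ber}, not by an inverse Mellin transform. Executed as written, your plan would terminate in a structurally different expression, so you would need either to adopt the paper's in-integrand approximation or to justify an additional displacement of the Gamma--Gamma parameters before you can identify the result with (\ref{Eq:Eq7}).
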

\begin{proof}
See Appendix \ref{App:App2}.
\end{proof}

\subsection{Direct FSO Channel Model}
The direct FSO channel between the $b$th FSO-BS and the HST's detector at time $t_b\!\leq\!t\!\leq\!t_{b_m}$ is modeled as $h_{b}(t)\!=\!h_{p,b}(t) h_{t,b}(t) h_{g,b}(t)$ in which $h_{p,b}(t)\!=\!10^{-\gamma d_b(t)}$, and
\begin{align}\label{Eq:Eq8}
    d_b(t) \!=\! \sqrt{ L_d^2 + (L_0 + (t \!-\! t_b)V_\text{HST})^2},
\end{align}
where $V_\text{HST}$ denotes the HST's speed. For modeling the path-loss and turbulence-induced fading, the same formulas as in (\ref{Eq:Eq2}) and (\ref{Eq:Eq3}) are used, where  $h_{t,k\ell}$, $\mu_{k\ell}$, $\sigma_{k\ell}^2$, $\alpha_{k\ell}$, $\beta_{k\ell}$, $d_{e,k\ell}$ are replaced with $h_{t,b}(t)$, $\mu_{b}$, $\sigma_{b}^2$, $\alpha_{b}$, $\beta_{b}$, $d_{b}(t)$, sequentially. Also, the GML reduces to the misalignment loss modeled as \cite{agheli2021uav}
\begin{equation}\label{Eq:Eq9}
    f_{h_{g,b}(t)}(h_{g,b}(t)) \!=\! \frac{\xi^2}{h_{0,b}^{\xi^2}} \big(h_{g,p}(t)\big)^{\xi^2-1}, \text{~for~} 0\!\leq\! h_{g,b}(t) \!\leq\! h_{0,b},
\end{equation}
where $h_{0,b}$ and $\xi$ denote misalignment parameters. 
Thus, the p.d.f of $h_{b}(t)$ under the weak turbulence is given by \cite{farid2007outage}
\begin{align}\label{Eq:Eq10}
    f_{h_b(t)}(h_b(t)) \!=\! \frac{\xi^2 (h_b(t))^{\xi^2-1}}{2 \mathcal{C}_d^{\xi^2}} \operatorname{erfc}\! \left(\! \frac{\ln\!\left(\!{\frac{h_b(t)}{\mathcal{C}_d}}\!\right) \!+\! \mathcal{C}_e}{\sqrt{8 \sigma_{b}^2}}\!\right) \!\mathcal{C}_f,
\end{align}
where $\sigma_{b}^2 \!=\! 0.307 C_n^2 k^{7/6} (d_{b}(t))^{11/6}$, $\mathcal{C}_d\!=\!h_{0,b}h_{p,b}(t)$, $\mathcal{C}_e \!=\! 2 \sigma_{b}^2\! \left(1+2\xi^2\right)$, and $\mathcal{C}_f \!=\! 2 \sigma_{b}^2 \xi^2 \! \left(1+\xi^2\right)$. The p.d.f of $h_{b}(t)$ under the moderate-to-strong turbulence is derived in the following theorem.
\begin{theorem}\label{The:The3} The p.d.f of $h_{b}(t)$ under the moderate-to-strong turbulence is
\begin{align}\label{Eq:Eq11} \nonumber
    f_{h_{b}(t)}(h_{b}(t)) &\!=\! \dfrac{\xi^2}{\Gamma(\alpha_{b})\Gamma(\beta_{b})} \dfrac{1}{h_b(t)} \left(\!\dfrac{\alpha_{b}\beta_{b}h_b(t)}{\mathcal{C}_d}\!\right)^{\!\!\xi^2}\\ 
    &~~\times\! \MeijerG*{3}{0}{1}{3}{1+\xi^2}{\xi^2, \alpha_{b}, \beta_{b}\!}{\!\frac{\alpha_{b}\beta_{b}h_{b}(t)}{\mathcal{C}_d}}.
\end{align}
% where $\mathcal{C}_d\!=\!h_{0,b}h_{p,b}(t)$.
\end{theorem}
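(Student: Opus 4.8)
The plan is to treat the path-loss factor $h_{p,b}(t)=10^{-\gamma d_b(t)}$ as deterministic for a fixed instant $t$, so that $h_b(t)$ reduces to the product of the Gamma--Gamma turbulence variable $h_{t,b}(t)$ and the power-law misalignment variable $h_{g,b}(t)$, whose density (\ref{Eq:Eq9}) is a pure monomial supported on the finite interval $[0,h_{0,b}]$. The derivation then mirrors the proof of Theorem~\ref{The:The2} (Appendix~\ref{App:App2}), the only structural change being that the GML density (\ref{Eq:Eq4}) is replaced by the simpler form (\ref{Eq:Eq9}); indeed, because (\ref{Eq:Eq9}) carries no logarithmic factor, the direct-channel computation is strictly easier than its RIS-assisted counterpart and needs no auxiliary approximation.

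First I would write the product density by conditioning on $h_{g,b}(t)=g$ and integrating over its support. Scaling out the deterministic $h_{p,b}(t)$ and performing the change of variable $u = h_b(t)/\!\big(h_{p,b}(t)\,g\big)$ collapses the two-fold conditional integral into the single \emph{semi-infinite} integral
\begin{equation}\label{Eq:planDirect}
 f_{h_b(t)}(h) = \frac{\xi^2\,h^{\xi^2-1}}{\mathcal{C}_d^{\xi^2}}\int_{h/\mathcal{C}_d}^{\infty} u^{-\xi^2}\, f_{h_{t,b}(t)}(u)\,du,
\end{equation}
with $\mathcal{C}_d=h_{0,b}h_{p,b}(t)$; note that the finite upper limit $h_{0,b}$ of the misalignment density is exactly what becomes the finite \emph{lower} limit $h/\mathcal{C}_d$ of (\ref{Eq:planDirect}).

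Next I would recast the Gamma--Gamma density (\ref{Eq:Eq2}) in Meijer-$G$ form via $K_\nu(2\sqrt{z})=\tfrac12\,G^{2,0}_{0,2}\!\big(z\mid{-};\,\tfrac{\nu}{2},-\tfrac{\nu}{2}\big)$ and the power-shift property (multiplying the argument by $z^\sigma$ displaces all parameters by $\sigma$), obtaining $f_{h_{t,b}(t)}(u)=\tfrac{\alpha_b\beta_b}{\Gamma(\alpha_b)\Gamma(\beta_b)}\,G^{2,0}_{0,2}\!\big(\alpha_b\beta_b u\mid{-};\,\alpha_b-1,\beta_b-1\big)$. Inserting this into (\ref{Eq:planDirect}) leaves the integral of a power against a $G^{2,0}_{0,2}$ over $[h/\mathcal{C}_d,\infty)$, which I would evaluate with the standard upper-incomplete Meijer-$G$ integral identity. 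Equivalently, and more transparently, I would replace the $G^{2,0}_{0,2}$ by its Mellin--Barnes contour integral, interchange the order of integration, perform the elementary $u$-integral $\int_{u_0}^\infty u^{-\xi^2-s}\,du = u_0^{1-\xi^2-s}/(s-(1-\xi^2))$ with $u_0=h/\mathcal{C}_d$, and recognize the new factor $1/(s-(1-\xi^2))=\Gamma(\xi^2-1+s)/\Gamma(\xi^2+s)$ as an additional Gamma ratio in the Mellin kernel. This raises the order from $G^{2,0}_{0,2}$ to $G^{3,0}_{1,3}$, contributing the upper parameter $\xi^2$ and the extra lower parameter $\xi^2-1$.

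Finally I would collect the prefactors and apply the power-shift property once more to absorb the leading monomial into the $G$-function, casting the result into the stated form (\ref{Eq:Eq11}). I expect the main obstacle to be the bookkeeping at this order-raising step: one must justify interchanging the Bromwich contour and the $u$-integral (convergence requires $\Re(\xi^2+s)>1$, which the incomplete integral guarantees), and one must track the parameter shifts so that the three lower parameters land at $\{\xi^2,\alpha_b,\beta_b\}$ (equivalently $\{\xi^2-1,\alpha_b-1,\beta_b-1\}$ prior to the last shift) and the single upper parameter at $1+\xi^2$. This is precisely the step at which the finite support of (\ref{Eq:Eq9}) forces the higher-order incomplete $G^{3,0}_{1,3}$, rather than the pure ratio-of-Gammas (Mellin transform) one would obtain from an unbounded misalignment density.
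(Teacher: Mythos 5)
Your proposal follows essentially the same route as the paper's Appendix~C: both reduce $f_{h_b(t)}$ to the semi-infinite integral $\tfrac{\xi^2 h^{\xi^2-1}}{\mathcal{C}_d^{\xi^2}}\int_{h/\mathcal{C}_d}^{\infty}u^{-\xi^2}f_{h_{t,b}(t)}(u)\,du$ by integrating out the bounded monomial misalignment density, rewrite the Gamma--Gamma law as a $G^{2,0}_{0,2}$, and evaluate the resulting incomplete power-weighted Meijer-$G$ integral to raise the order to $G^{3,0}_{1,3}$ with the extra parameter pair coming from the factor $\Gamma(\xi^2-1+s)/\Gamma(\xi^2+s)$. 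The only difference is that you spell out the Mellin--Barnes justification of the order-raising step, which the paper leaves to a citation and ``some mathematical manipulations,'' so the argument is correct and matches the paper's proof.
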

\begin{proof}
See Appendix \ref{App:App3}.
\end{proof}

\subsection{HST Received Signals}
In this section, we model the FSO received signals at the HST's detector for the FOR and DOR strategies. To this end, we recall $h_{k\ell}$ as $h_{m,k\ell}$ with similar properties and dedicated to the $n_m$ RIS cell, for $n_m\!=\!1,2,...,N_m$. Let us assume an intensity modulation/direct detection (IM/DD) FSO system, where $s_b(t)\!\in\!\{0,\sqrt{P}\}$ denotes the transmitted symbol from the $b$th FSO-BS at time $t$. According to Fig.~\ref{Fig:Fig.1}, the HST's detector receives the signal from the $b$th FSO-BS via direct FSO channels if $t_b\!\leq\!t\!<\!t_{b_m}$ and via RIS-assisted ones if $t_{b_m}\!\leq\!t\!<\!t_{b+1}$. Therefore, the received signal is modeled as\footnote{We assume that time shifts between the transmitted and received direct or RIS-assisted signals are ignitable thanks to short-range but high-rate links.}
\begin{equation}\label{Eq:Eq12}
\begin{aligned}
    &r_b(t) \!=\! \\
    &\begin{cases}
    \big(\eta h_b(t) s_b(t) \!+\! \omega(t) \big) \mathbb{I}(\mathcal{C}_1),\\
    \text{\small FOR:~} \!\!\left\{\!\eta \!\left(\sum\limits_{\ell=1}^{N_\ell}\sum\limits_{k=1}^{N_k} \rho_{k\ell} h_{m,k\ell} \!\right)\! s_b(t) \!+\! \omega^\prime(t)\!\right\}\!
    \mathbb{I}(\mathcal{C}_2),\\
    \text{\small DOR:~} \!\!\left\{\!\eta \!\left(\sum\limits_{n_m=1}^{N_m}\sum\limits_{\ell=1}^{N_\ell}\sum\limits_{k=1}^{N_k} \rho_{k\ell} h_{m,k\ell} \!\right)\! s_b(t) \!+\! \omega^{\prime\prime}(t)\!\right\}\! \mathbb{I}(\mathcal{C}_3),\\
    \end{cases}
\end{aligned}
\end{equation}
where
\begin{align*}
    &\mathcal{C}_1\!: t_b\!\leq\!t\!<\!t_{b_m},\\
    &\mathcal{C}_2\!: t_{b_m}\!\!+\!(n_m\!-\!1)t_m\!\leq\!t\!<\!t_{b_m}\!\!+\!n_mt_m, \forall n_m,\\
    &\mathcal{C}_3\!: t_{b_m}\!\leq\!t\!<\!t_{b+1}.
\end{align*}
Herein, $\eta$ denotes the optical-to-electrical coefficient, $\rho_{k\ell}$ indicates the RIS reflection coefficient, and  $t_m\!=\!L_m/(N_mV_\text{HST})$ indicates the service time span of the $n_m$'th RIS cell. Furthermore,
$\omega(t)\!\sim\!\mathcal{N}(0,\sigma_{\omega}^2)$, $\omega^\prime(t)\!\sim\!\mathcal{N}(0,\sigma_{\omega^\prime}^{2})$, and $\omega^{\prime\prime}(t)\!\sim\!\mathcal{N}(0,\sigma_{\omega^{\prime\prime}}^{2})$ are additive white Gaussian noise terms.

\section{Performance Analyses}\label{Sec:Sec3}
Through this section, the proposed network's average SNR and outage probability are formulated.

\subsection{Average SNR}
If we assume the RIS-assisted channels are mutually uncorrelated, the SNR of the $b$th FSO-BS link is obtained as
% \begin{equation}\label{Eq:Eq13}
% \begin{aligned}
%     \gamma_b(t) \!=\!
%     \begin{cases}
%     \eta^2 \Gamma_b^2(t) P / \sigma_\omega^2, \text{~if~} t_b\!\leq\!t\!<\!t_{b_m},\\
%     \text{\small FOR:~} \eta^2 \!\left(\sum\limits_{\ell=1}^{N_\ell}\sum\limits_{k=1}^{N_k} \rho_{k\ell}^2 \Gamma_{m,k\ell}^2\! \!\right)\! P /\sigma_{\omega^\prime}^2,\\
%     \text{~~~~~~if~} t_{b_m}\!\!+\!(n_m\!-\!1)t_m\!\leq\!t\!<\!t_{b_m}\!\!+\!n_mt_m, \forall n_m, \\
%     \text{\small DOR:~} \eta^2 \!\left(\sum\limits_{n_m=1}^{N_m}\sum\limits_{\ell=1}^{N_\ell}\sum\limits_{k=1}^{N_k} \rho_{k\ell}^2 \Gamma_{m,k\ell}^2\! \!\right)\! P / \sigma_{\omega^{\prime\prime}}^2, \\
%     \text{~~~~~~if~} t_{b_m}\!\leq\!t\!<\!t_{b+1},\\
%     \end{cases}
% \end{aligned}
% \end{equation}
\begin{equation}\label{Eq:Eq13}
\begin{aligned}
    \gamma_b(t) \!=\!
    \begin{cases}
    \bar{\gamma}_b (h_b(t))^2 \mathbb{I}(\mathcal{C}_1),\\
    \text{\small FOR:~} \!\left\{\sum\limits_{\ell=1}^{N_\ell}\sum\limits_{k=1}^{N_k} \bar{\gamma}_{k\ell} h_{m,k\ell}^2\!\right\}\! \mathbb{I}(\mathcal{C}_2),\\
    \text{\small DOR:~}  \!\left\{\sum\limits_{n_m=1}^{N_m}\sum\limits_{\ell=1}^{N_\ell}\sum\limits_{k=1}^{N_k} \bar{\gamma}^\prime_{k\ell} h_{m,k\ell}^2\!\right\}\!\mathbb{I}(\mathcal{C}_3),\\
    \end{cases}
\end{aligned}
\end{equation}
wherein we have $\bar{\gamma}_b\!=\!\eta^2 P / \sigma_\omega^2$, $\bar{\gamma}_{k\ell}\!=\!\eta^2 \rho_{k\ell}^2 P / \sigma_{\omega^\prime}^2$, and $\bar{\gamma}^\prime_{k\ell}\!=\!\eta^2 \rho_{k\ell}^2 P / \sigma_{\omega^{\prime\prime}}^2$.
Thus, the average SNR is achieved by replacing $(h_b(t))^2$ and $h_{m,k\ell}^2$ with $\Gamma_b^2(t)\!=\!\mathbb{E}\big\{\!(h_b(t))^2\!\big\}$ and $\Gamma_{m,k\ell}^2\!=\!\mathbb{E}\big\{\!h_{m,k\ell}^2\!\big\}$ in (\ref{Eq:Eq13}), respectively. Under the weak turbulence, by the use of (\ref{Eq:Eq6}) and (\ref{Eq:Eq10}), we have
\begin{subequations}
\begin{align}\nonumber
    &\Gamma_{m,k\ell}^2\!\cong\!
    \dfrac{\sqrt{\varpi_{k\ell}} \,\mathcal{C}_b^3}{8\sqrt{\pi}\,\mathcal{C}_a^{\varpi_{k\ell}}}\Bigg[
    \dfrac{\big({5\!-\!\ln(\mathcal{C}_a)\big)}}{3} 
    \Bigg(\!\dfrac{e^{-\mathcal{C}_g^2} \big({\mathcal{C}_g^2+1}\big)}{\sqrt{\pi}\,\mathcal{C}_g^3} \!-\! \operatorname{erfc}\!\big(\mathcal{C}_g\big) \!\Bigg)
    \\ \label{Eq:Eq14a}
    &~+ \dfrac{\sqrt{8\sigma_{k\ell}^2} \,\mathcal{C}_b^{\varpi_{k\ell}}}{{\sqrt{\pi} \left(\varpi_{k\ell}\!+\!3\right)}} \Bigg(\!\!
    \dfrac{1}{\sqrt{\pi}\,\mathcal{C}_g^{\varpi_{k\ell}+3}}
    \operatorname{\Gamma}\!\left(\!\frac{\varpi_{k\ell}\!+\!4}{2},\mathcal{C}_g^2\right) \!-\! \operatorname{erfc}\!\big(\mathcal{C}_g\big)\!\!\Bigg) \!\Bigg],
    \\ \nonumber
    &\Gamma_b^2(t)\!=\!
    \dfrac{\xi^2 \big(\mathcal{C}_e/\mathcal{C}_h\big)^{\!{\xi}^2+2} \mathcal{C}_f}{2\left({\xi}^2\!+\!2\right)\mathcal{C}_d^{\xi^2}}\\ \label{Eq:Eq14b}
    &~~~~~~~~\times\!\Bigg[ 
    \dfrac{1}{\sqrt{\pi}}
    \operatorname{\Gamma}\!\left(\!\frac{{\xi}^2\!+\!3}{2},\mathcal{C}_h^2\right)
    \!-
    \mathcal{C}_h^{\xi^2+2}
    \operatorname{erfc}\!\big(\mathcal{C}_h\big) \!\Bigg],
\end{align}
\end{subequations}
where $\mathcal{C}_g\!=\!\frac{\mathcal{C}_b}{\mathcal{C}_a\sqrt{8 \sigma_{k\ell}^2}}$, and $\mathcal{C}_h\!=\!\frac{\mathcal{C}_e}{\mathcal{C}_d\sqrt{8 \sigma_{b}^2}}$.
However, under the moderate-to-strong turbulence, by using (\ref{Eq:Eq7}) and (\ref{Eq:Eq11}), we have
\begin{subequations}
\begin{align}\nonumber
    &\Gamma_{m,k\ell}^2\!=\! \dfrac{\sqrt{\varpi_{k\ell} {\zeta}}}{2\sqrt{\pi}\Gamma(\alpha_{k\ell})\Gamma(\beta_{k\ell})}  \left(\!\dfrac{\alpha_{k\ell}\beta_{k\ell}}{\mathcal{C}_a}\!\right)^{\!\!\mathcal{C}_c} \\ \label{Eq:Eq15a}
    &~~~~~~~~\times\!
    \MeijerG*{4}{1}{2}{4}{-1-\mathcal{C}_c,1+\varpi_{k\ell}}{-2-\mathcal{C}_c,1+\mathcal{C}_c, \alpha_{k\ell}, \beta_{k\ell}\!}{\!\frac{\alpha_{k\ell}\beta_{k\ell}}{\mathcal{C}_a}}, \\ \nonumber
    &\Gamma_b^2(t)\!=\!\dfrac{\xi^2}{\Gamma(\alpha_{b})\Gamma(\beta_{b})} \left(\!\dfrac{\alpha_{b}\beta_{b}}{\mathcal{C}_d}\!\right)^{\!\!\xi^2}\\ \label{Eq:Eq15b}
    &~~~~~~~~\times\! \MeijerG*{4}{1}{2}{4}{-1-\xi^2,1+\xi^2}{-2-\xi^2,\xi^2, \alpha_{b}, \beta_{b}\!}{\!\frac{\alpha_{b}\beta_{b}}{\mathcal{C}_d}}.
\end{align}
\end{subequations}

\subsection{Outage Probability}
% To find the network's outage probability, we firstly define SNR variables then derive their corresponding p.d.f and cumulative distribution function (CDF) expressions. In this regard, for the $b$th FSO-BS, we consider SNR variable $\gamma_b(t)$ at the train's detector. Given that, $\text{SNR}_b(t)$, $\Gamma_b^2(t)$, and $\Gamma_{m,k\ell}^2$ in (\ref{Eq:Eq13}) are replaced with $\gamma_b(t)$, $(h_b(t))^2$, and $h_{m,k\ell}^2$, respectively.
% Besides, we define
% $\gamma_{m,k\ell}\!=\!\bar{\gamma}_{k\ell} h_{m,k\ell}^2$ and $\gamma_{m,k\ell}^\prime\!=\!\bar{\gamma}^\prime_{k\ell} h_{m,k\ell}^2$, where $\bar{\gamma}_{k\ell}\!=\!\eta^2 \rho_{k\ell}^2 P / \sigma_{\omega^\prime}^2$ and $\bar{\gamma}^\prime_{k\ell}\!=\!\eta^2 \rho_{k\ell}^2 P / \sigma_{\omega^{\prime\prime}}^2$ are average SNR variables. Similarly, we have $\bar{\gamma}_b\!=\!\eta^2 P / \sigma_\omega^2$.
The QoS is ensured by keeping $\gamma_b(t)$ above a given threshold $\gamma_{th}$ at time $t$. Therefore, the network's outage probability is defined as \cite{agheli2021uav}
\begin{align}
    P_{out}(t) \triangleq \operatorname{Pr}\! \big\{\gamma_{b}(t) \!\leq\! \gamma_{th} \big\} = F_{\gamma_{b}(t)}(\gamma_{b}(t)\!=\!\gamma_{th}),
\end{align}
where $F_{\gamma_{b}(t)}(\gamma_{b}(t))$ shows the cumulative distribution function (CDF) of $\gamma_{b}(t)$, which is derived in the following theorems.

\begin{theorem}\label{The:The4} The CDF of $\gamma_{b}(t)$ under the weak turbulence is obtained for the direct and RIS-assisted FSO channels, as what follows.
\begin{itemize}
    \item Direct FSO channels;
    \begin{align}\label{Eq:Eq17} \nonumber
    &F_{\gamma_{b}(t)}(\gamma_{b}(t)) \!=\! \frac{ \big(\mathcal{C}_e/\mathcal{C}_h\big)^{\!\frac{{\xi}^2}{2}} \mathcal{C}_f}{4\, \mathcal{C}_d^{\xi^2}} \\ \nonumber
    &~~~~~~~~\times\! \Bigg\{\!
     \hat{\mathcal{C}}_h^{\frac{{\xi}^2}{2}}\! \!\left[1\!+\!\frac{\gamma_b(t)}{\mathcal{C}_e}\!\right]^{\!\frac{{\xi}^2}{2}}\! \operatorname{erfc}\!\left(\!\hat{\mathcal{C}}_h\!\left[1\!+\!\frac{\gamma_b(t)}{\mathcal{C}_e}\!\right]\!\right) \\
    &~~~~~~~~-\! \frac{1}{\sqrt{{\pi}}} \operatorname{\Gamma}\!\left(\!\frac{{\xi}^2\!+\!2}{4},\hat{\mathcal{C}}_h^2\!\left[1\!+\!\frac{\gamma_b(t)}{\mathcal{C}_e}\!\right]^{\!2}\right) \!\!\!\Bigg\}
    \mathbb{I}(\mathcal{C}_1),
    \end{align}
    \item RIS-assisted FSO channels and the FOR scenario;
    \begin{align}\label{Eq:Eq18} \nonumber
    &F_{\gamma_{b}(t)}(\gamma_{b}(t)) \!=\! \prod\limits_{\ell=1}^{N_\ell}\prod\limits_{k=1}^{N_k} \dfrac{\sqrt{8\sigma_{k\ell}^2} \,\big(\mathcal{C}_b/\mathcal{C}_g\big)^{\!\frac{\varpi_{k\ell}}{2}}}{8{\pi} \sqrt{\varpi_{k\ell}} \,\mathcal{C}_a^{\varpi_{k\ell}}} \\ \nonumber
    &~~~~~~~~\times\! \Bigg\{\!
     \hat{\mathcal{C}}_g^{\frac{\varpi_{k\ell}}{2}}\! \!\left[1\!+\!\frac{\gamma_b(t)}{\mathcal{C}_b}\!\right]^{\!\frac{\varpi_{k\ell}}{2}}\! \operatorname{erfc}\!\left(\!\hat{\mathcal{C}}_g\!\left[1\!+\!\frac{\gamma_b(t)}{\mathcal{C}_b}\!\right]\!\right) \\ 
    &~~~~~~~~-\! \frac{1}{\sqrt{{\pi}}} \operatorname{\Gamma}\!\left(\!\frac{\varpi_{k\ell}\!+\!2}{4},\hat{\mathcal{C}}_g^2\!\left[1\!+\!\frac{\gamma_b(t)}{\mathcal{C}_b}\!\right]^{\!2}\right) \!\!\!\Bigg\} \mathbb{I}(\mathcal{C}_2),
    \end{align}
    \item RIS-assisted FSO channels and the DOR scenario;
    \begin{align}\label{Eq:Eq19} \nonumber
    &F_{\gamma_{b}(t)}(\gamma_{b}(t)) \!=\! \!\! \prod\limits_{n_m=1}^{N_m}
    \prod\limits_{\ell=1}^{N_\ell}\prod\limits_{k=1}^{N_k} \dfrac{\sqrt{8\sigma_{k\ell}^2} \,\big(\mathcal{C}_b/\mathcal{C}_g\big)^{\!\frac{\varpi_{k\ell}}{2}}}{8{\pi} \sqrt{\varpi_{k\ell}} \,\mathcal{C}_a^{\varpi_{k\ell}}} \\\nonumber
    &~~~~~~~~\times\! \Bigg\{\!
     \doublehat{\mathcal{C}}_g^{\frac{\varpi_{k\ell}}{2}}\! \!\left[1\!+\!\frac{\gamma_b(t)}{\mathcal{C}_b}\!\right]^{\!\frac{\varpi_{k\ell}}{2}}\! \operatorname{erfc}\!\left(\!\doublehat{\mathcal{C}}_g\!\left[1\!+\!\frac{\gamma_b(t)}{\mathcal{C}_b}\!\right]\!\right) \\ 
    &~~~~~~~~-\! \frac{1}{\sqrt{{\pi}}} \operatorname{\Gamma}\!\left(\!\frac{\varpi_{k\ell}\!+\!2}{4},\doublehat{\mathcal{C}}_g^2\!\left[1\!+\!\frac{\gamma_b(t)}{\mathcal{C}_b}\!\right]^{\!2}\right) \!\!\!\Bigg\} \mathbb{I}(\mathcal{C}_3),
    \end{align}
\end{itemize}
where $\hat{\mathcal{C}}_h\!=\!{\mathcal{C}_h}/{\sqrt{\bar{\gamma}_b}}$, $\hat{\mathcal{C}}_g\!=\!{\mathcal{C}_g}/{\sqrt{\bar{\gamma}_{k\ell}}}$, and $\doublehat{\mathcal{C}}_g\!=\!{\mathcal{C}_g}/{\sqrt{\bar{\gamma}_{k\ell}^\prime}}\,$.
\end{theorem}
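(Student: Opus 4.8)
The plan is to build all three CDFs from the squared-channel structure of the SNR in (\ref{Eq:Eq13}), reusing a single branch computation. I would begin with the direct link, where $\gamma_b(t)=\bar{\gamma}_b\,(h_b(t))^2$ is a strictly increasing function of the nonnegative variable $h_b(t)$. Consequently $F_{\gamma_b(t)}(\gamma)=\operatorname{Pr}\{h_b(t)\le\sqrt{\gamma/\bar{\gamma}_b}\}=\int_0^{\sqrt{\gamma/\bar{\gamma}_b}}f_{h_b(t)}(x)\,dx$, into which I would substitute the weak-turbulence p.d.f (\ref{Eq:Eq10}). The integrand is a monomial $x^{\xi^2-1}$ times a complementary error function, so the natural device is integration by parts: integrate the monomial and differentiate the $\operatorname{erfc}$, which converts it into a Gaussian kernel. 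The boundary term evaluated at the upper limit produces the $\operatorname{erfc}$ contribution of (\ref{Eq:Eq17}), while the remaining Gaussian integral has to be reduced to a single special function.

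For that residual integral I would invoke the identity $\operatorname{erfc}(z)=\tfrac{1}{\sqrt{\pi}}\,\Gamma(\tfrac12,z^2)$ and, more generally, the substitution $t=z^2$, which turns $\int_w^{\infty} z^{\,p}e^{-z^2}\,dz$ into $\tfrac12\Gamma\!\big(\tfrac{p+1}{2},w^2\big)$. Taking $p=\xi^2/2$ then yields precisely the $\Gamma\!\big(\tfrac{\xi^2+2}{4},\cdot\big)$ term of (\ref{Eq:Eq17}). The prefactors $\mathcal{C}_d,\mathcal{C}_e,\mathcal{C}_f,\mathcal{C}_h$ and the normalisation $\hat{\mathcal{C}}_h=\mathcal{C}_h/\sqrt{\bar{\gamma}_b}$ (which absorbs the $\bar{\gamma}_b$ dependence inherited from the square root in the upper limit) are then collected by completing the square, so that the $\operatorname{erfc}$ and the incomplete-Gamma arguments share the common factor $[1+\gamma_b(t)/\mathcal{C}_e]$. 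This constant bookkeeping and the completion of the square are the most error-prone part of the direct case.

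The two RIS-assisted expressions reuse this single-branch result. Since the RIS sub-channels are taken to be mutually independent (the same assumption used for the average SNR), I would first derive the per-element CDF by repeating the argument above for $\bar{\gamma}_{k\ell}h_{m,k\ell}^2$, starting now from the weak-turbulence p.d.f (\ref{Eq:Eq6}); here the GML exponent $\varpi_{k\ell}$ plays the role of $\xi^2$ and $(\mathcal{C}_a,\mathcal{C}_b,\mathcal{C}_g)$ replace $(\mathcal{C}_d,\mathcal{C}_e,\mathcal{C}_h)$, producing the bracketed factor of (\ref{Eq:Eq18}). For the composite FOR SNR, which is a sum over the $N_k\times N_\ell$ micro-mirrors of the active cell, I would factor the joint outage event using independence together with the containment $\{\sum_{k,\ell}\bar{\gamma}_{k\ell}h_{m,k\ell}^2\le\gamma\}\subseteq\bigcap_{k,\ell}\{\bar{\gamma}_{k\ell}h_{m,k\ell}^2\le\gamma\}$, so that the CDF becomes the product $\prod_\ell\prod_k$ of per-element CDFs evaluated at the common threshold, giving (\ref{Eq:Eq18}). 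The DOR case (\ref{Eq:Eq19}) is identical except that all $N_m$ cells serve simultaneously, so the product also runs over $n_m$, and $\bar{\gamma}_{k\ell}$ is replaced by $\bar{\gamma}_{k\ell}'$, i.e.\ $\hat{\mathcal{C}}_g$ becomes $\doublehat{\mathcal{C}}_g$.

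I expect two steps to be the real obstacles. First, casting the single-branch integral into the compact $\operatorname{erfc}$ plus incomplete-Gamma closed form is delicate: the log-normal turbulence enters the $\operatorname{erfc}$ through a logarithmic argument, and reconciling it with the stated $[1+\gamma_b(t)/\mathcal{C}_b]$ dependence requires the logarithmic term to be fully absorbed into $\mathcal{C}_b,\mathcal{C}_g$ during the completion of the square; I would verify this collapse carefully, cross-checking the prefactors against the Monte-Carlo curves. Second, the factorisation of the sum's CDF into a product of marginals is exact only as the containment bound above, so I would state explicitly that (\ref{Eq:Eq18})--(\ref{Eq:Eq19}) hold under the independence assumption and this event-containment argument rather than via an exact convolution.
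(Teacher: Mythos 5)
Your overall route coincides with the paper's: both reduce each branch to a single-element computation on the weak-turbulence channel p.d.f and then take products over the RIS elements. For the direct link, the paper first changes variables from $h_b(t)$ to $\gamma_b(t)=\bar{\gamma}_b(h_b(t))^2$ to get the SNR p.d.f in (\ref{Eq:EqApp5}) and then integrates, while you integrate the channel p.d.f directly up to $\sqrt{\gamma/\bar{\gamma}_b}$; these are the same calculation, and your integration-by-parts plus the identity $\int_w^{\infty}z^{p}e^{-z^2}dz=\tfrac12\Gamma\big(\tfrac{p+1}{2},w^2\big)$ is a sensible way to reach the $\operatorname{erfc}$ plus incomplete-Gamma structure that the paper obtains but does not spell out (it only invokes ``the conventional CDF formula''). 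The one substantive divergence is the multi-element step. The paper asserts in (\ref{Eq:EqApp6})--(\ref{Eq:EqApp7}) that the p.d.f of the \emph{sum} $\sum_{k,\ell}\bar{\gamma}_{k\ell}h_{m,k\ell}^2$ equals the product of the marginal p.d.f's, which is not a valid operation for a sum of independent random variables (that would require a convolution), and then integrates term by term to get the product of CDFs. You instead justify the product via independence together with the containment $\{\sum_{k,\ell}\gamma_{m,k\ell}\le\gamma\}\subseteq\bigcap_{k,\ell}\{\gamma_{m,k\ell}\le\gamma\}$, which is rigorous but yields only $F_{\gamma_b(t)}(\gamma)\le\prod_{k,\ell}F_{\gamma_{m,k\ell}}(\gamma)$, i.e.\ an upper bound rather than the equality claimed in (\ref{Eq:Eq18})--(\ref{Eq:Eq19}). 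So neither argument establishes the stated equalities exactly; yours is the more defensible of the two because it is explicit that the product form is a bound/approximation, and you should indeed state the theorem's RIS-assisted branches with that caveat, exactly as you propose.
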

\begin{proof}
See Appendix \ref{App:App4}.
\end{proof}

\begin{theorem}\label{The:The5} The CDF of $\gamma_{b}(t)$ under the moderate-to-strong turbulence is obtained for the direct and RIS-assisted FSO channels, as what follows.
\begin{itemize}
    \item Direct FSO channels;
    \begin{align}\label{Eq:Eq20} \nonumber
    &F_{\gamma_{b}(t)}(\gamma_{b}(t)) \!=\! \dfrac{\xi^2}{\bar{\gamma}_b^\frac{\xi^2}{2} \Gamma(\alpha_{b})\Gamma(\beta_{b})} \!\left(\!\dfrac{\alpha_{b}\beta_{b}}{\mathcal{C}_d}\!\right)^{\!\!\xi^2}\\ 
    &~~~~~~~\times\! \MeijerG*{4}{1}{2}{4}{1-\xi^2,1+\xi^2}{-\xi^2,\xi^2, \alpha_{b}, \beta_{b}\!}{\!\frac{\alpha_{b}\beta_{b}\sqrt{\gamma_b(t)}}{\sqrt{\bar{\gamma}_{b}}\,\mathcal{C}_d}} \mathbb{I}(\mathcal{C}_1),
    \end{align}
    \item RIS-assisted FSO channels and the FOR scenario;
    \begin{align}\label{Eq:Eq21} \nonumber
    &F_{\gamma_{b}(t)}(\gamma_{b}(t)) \!=\! \prod\limits_{\ell=1}^{N_\ell}\prod\limits_{k=1}^{N_k} \dfrac{\sqrt{\varpi_{k\ell} {\zeta}}}{\sqrt{\pi} \bar{\gamma}_{k\ell}^{\frac{\mathcal{C}_c}{2}} \Gamma(\alpha_{k\ell})\Gamma(\beta_{k\ell})}  \!\left(\!\dfrac{\alpha_{k\ell}\beta_{k\ell}}{\mathcal{C}_a}\!\right)^{\!\!\mathcal{C}_c} \\
    &~\times\!
    \MeijerG*{4}{1}{2}{4}{1-\mathcal{C}_c,1+\varpi_{k\ell}}{-\mathcal{C}_c,1+\mathcal{C}_c, \alpha_{k\ell}, \beta_{k\ell}\!}{\!\frac{\alpha_{k\ell}\beta_{k\ell}\sqrt{\gamma_b(t)}}{\sqrt{\bar{\gamma}_{k\ell}}\,\mathcal{C}_a}} \mathbb{I}(\mathcal{C}_2),
    \end{align}
    \item RIS-assisted FSO channels and the DOR scenario;
    \begin{align}\label{Eq:Eq22} \nonumber
    &\!\!\!\!\!\!\!\!\!\!F_{\gamma_{b}(t)}(\gamma_{b}(t)) \!=\!\!\! \prod\limits_{n_m=1}^{N_m} \prod\limits_{\ell=1}^{N_\ell}\prod\limits_{k=1}^{N_k} \dfrac{\sqrt{\varpi_{k\ell} {\zeta}}}{\sqrt{\pi} \bar{\gamma}_{k\ell}^{\prime \frac{\mathcal{C}_c}{2}} \Gamma(\alpha_{k\ell})\Gamma(\beta_{k\ell})} \!\left(\!\dfrac{\alpha_{k\ell}\beta_{k\ell}}{\mathcal{C}_a}\!\right)^{\!\!\mathcal{C}_c}\\
    &~\times 
    \!\MeijerG*{4}{1}{2}{4}{1-\mathcal{C}_c,1+\varpi_{k\ell}}{-\mathcal{C}_c,1+\mathcal{C}_c, \alpha_{k\ell}, \beta_{k\ell}\!}{\!\frac{\alpha_{k\ell}\beta_{k\ell}\sqrt{\gamma_b(t)}}{\sqrt{\bar{\gamma}_{k\ell}^\prime}\,\mathcal{C}_a}} \mathbb{I}(\mathcal{C}_3).
    \end{align}
\end{itemize}
\end{theorem}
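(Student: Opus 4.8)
The plan is to reduce each of the three cases to the CDF of a single underlying channel gain, evaluated through a finite-range integral of a Meijer's $G$-function. Since the SNR on every sub-link is a monotone (squared) function of a non-negative gain, the CDF of the SNR is the CDF of that gain evaluated at a square-root argument, and the CDF of the gain is obtained by integrating the p.d.f.s already established in Theorems~\ref{The:The2} and~\ref{The:The3}.

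\emph{Direct link.} Because $\gamma_b(t)=\bar\gamma_b\,(h_b(t))^2$ is strictly increasing in $h_b(t)\ge 0$, we have $F_{\gamma_b(t)}(\gamma)=\operatorname{Pr}\{h_b(t)\le\sqrt{\gamma/\bar\gamma_b}\}=\int_0^{\sqrt{\gamma/\bar\gamma_b}} f_{h_b(t)}(h)\,dh$. I would insert the p.d.f. (\ref{Eq:Eq11}), pull the constant $\tfrac{\xi^2}{\Gamma(\alpha_b)\Gamma(\beta_b)}(\alpha_b\beta_b/\mathcal{C}_d)^{\xi^2}$ outside, and reduce the problem to $\int_0^{y} h^{\xi^2-1}\,G^{3,0}_{1,3}(\alpha_b\beta_b h/\mathcal{C}_d)\,dh$ with $y=\sqrt{\gamma/\bar\gamma_b}$. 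This is closed by the standard identity $\int_0^z t^{s-1}G^{m,n}_{p,q}(wt)\,dt=z^{s}\,G^{m,n+1}_{p+1,q+1}(wz)$ (the upper parameter list gaining $1-s$ and the lower list gaining $-s$), taken at $s=\xi^2$; substituting $y$ so that the argument becomes $\alpha_b\beta_b\sqrt{\gamma}/(\sqrt{\bar\gamma_b}\,\mathcal{C}_d)$ then reproduces (\ref{Eq:Eq20}).

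\emph{RIS-assisted links.} For FOR the SNR is $\sum_{k,\ell}\bar\gamma_{k\ell}h_{m,k\ell}^2$ and for DOR it is $\sum_{n_m,k,\ell}\bar\gamma'_{k\ell}h_{m,k\ell}^2$. Using the mutual independence of the sub-channels assumed at the start of Section~\ref{Sec:Sec3} and the fact that a sum of non-negative terms falls below $\gamma_{th}$ only when each term does, the joint outage event factorizes, so $F_{\gamma_b(t)}$ becomes the product over the micro-mirror sub-channels of the per-link CDFs $F_{h_{m,k\ell}}(\sqrt{\gamma/\bar\gamma_{k\ell}})$ (resp.\ with $\bar\gamma'_{k\ell}$). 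Each factor is then evaluated exactly as in the direct case but starting from (\ref{Eq:Eq7}): the same $G$-function integral identity, now at $s=\mathcal{C}_c$, yields the single-link CDF, and substituting $\sqrt{\gamma/\bar\gamma_{k\ell}}$ (resp.\ $\sqrt{\gamma/\bar\gamma'_{k\ell}}$) gives the generic factor of (\ref{Eq:Eq21}) (resp.\ (\ref{Eq:Eq22})), the DOR product merely ranging additionally over the $N_m$ cells.

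\emph{Main obstacle.} The analytic work---applying the tabulated Meijer-$G$ integral and tracking the parameter shifts---is routine once the identity is invoked. The genuinely delicate step is the factorization of the RIS outage into a product of per-sub-channel CDFs: the CDF of a \emph{sum} of independent gains is a convolution, not a product, so the product form coincides exactly only with the associated ``all sub-links fail'' event and otherwise serves as a conservative closed-form surrogate. I would therefore justify this step carefully, making explicit the independence hypothesis and the containment $\{\sum_i\gamma_i\le\gamma_{th}\}\subseteq\{\gamma_i\le\gamma_{th}\ \forall i\}$ that underlies the product (whence $\prod_i F_{\gamma_i}(\gamma_{th})$ upper-bounds the true outage), and then confirm against the Monte-Carlo curves that the surrogate is tight in the operating regime of interest.
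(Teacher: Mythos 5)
Your proposal follows essentially the same route as the paper's Appendix~E: transform the SNR back to the underlying channel gain, integrate the Meijer-$G$ p.d.f.s of Theorems~\ref{The:The2} and~\ref{The:The3} over a finite range via the standard identity $\int_0^z t^{s-1}G^{m,n}_{p,q}(wt)\,dt$, and write the RIS-assisted CDFs as products over the micro-mirror sub-channels. Your closing caveat is well taken and is in fact sharper than the paper itself: the appendix silently writes the p.d.f.\ of the \emph{sum} as a product of marginal p.d.f.s, so the resulting product-of-CDFs in (\ref{Eq:Eq21})--(\ref{Eq:Eq22}) is exactly the probability of the ``all sub-links below threshold'' event (an upper bound on the outage of the sum, via the containment you state), not the exact CDF --- a point the paper does not acknowledge.
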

\begin{proof}
See Appendix \ref{App:App5}.
\end{proof}

% \subsection{Average Bit Error Rate}

\section{Numerical Results and Discussions}\label{Sec:Sec4}
In this section, we firstly analyze the network's performance under the assumed serving scenarios and compare the proposed RIS-assisted systems with the existing relay-based ones from the average SNR and outage probability perspectives. The system model parameters used for numerical results are summarized in Table~\ref{Tab:Tab.1}. 
Then, we provide a system design framework with an illustrative sample for practical use cases.
\noindent
\renewcommand{\arraystretch}{1}
\begin{table}[t!]
\begin{center}
\caption{System parameters for numerical results.}\label{Tab:Tab.1}
\begin{tabular}{ | l | c | l |}
\hline
\multicolumn{1}{|c|}{\small \text{Parameter}} & \small \text{Symbol} & \multicolumn{1}{c|}{\small \text{Value}}\\
\hline
\hline
\!\! \footnotesize Vertical distance along the track & \footnotesize  $L_d$ &  \footnotesize $2.5\,[\text{m}]$\\
\hline
\!\! \footnotesize Half divergence angle & \footnotesize  $\Psi$ &  \footnotesize $3.5\degree$\\
\hline
\!\! \footnotesize RIS setup & \footnotesize  $\!\!(N_k,N_\ell,N_m)\!\!$ &  \footnotesize $(10,10,25)$\\
\hline
\!\! \footnotesize RIS reflection coefficient (passive)& \footnotesize  $\rho_{k\ell}$ &  \footnotesize $0.95$\\
\hline
\!\! \footnotesize Attenuation factor & \footnotesize  $\gamma$ &  \footnotesize $0.44\,[\text{dB/km}]$\\
\hline
\!\! \footnotesize Refraction structure index & \footnotesize  $C_n^2$ &  \footnotesize $10^{-15}\, [\text{m}^{-2/3}]$\\
\hline
\!\! \footnotesize Wavelength & \footnotesize  $\lambda$ &  \footnotesize $850\, [\text{nm}]$\\
\hline
\!\! \footnotesize Source beam waist & \footnotesize  $\hat{w}_0$ &  \footnotesize $\lambda/2 \pi \Psi$\\
\hline
\!\! \footnotesize Building sway fluctuations & \footnotesize $ \delta_s,\delta_r,\delta_l$ &  \footnotesize $5$ \\
\hline
\!\! \footnotesize Detector diameters & \footnotesize $( \alpha_r,\alpha_p)$ &  \footnotesize $(2.5,20) [\text{cm}]$\\
\hline
\!\! \footnotesize Approximation factor & \footnotesize $\zeta$ &  \footnotesize $100$ \\
\hline
\!\! \footnotesize Misalignment parameters & \footnotesize $(h_{0,b},\xi)$ &  \footnotesize $(0.0764,2.35)$ \\
\hline
\footnotesize Optical transmission power & \footnotesize $P$ &  \footnotesize $40\,[\text{mW}]$\\
\hline
\!\! \footnotesize Optical-to-electrical coefficient & \footnotesize $\eta$ &  \footnotesize $0.5$\\
\hline
\!\! \footnotesize Speed of the HST & \footnotesize $V_\text{HST}$ &  \footnotesize $300\,[\text{km/h}]$\\
\hline
\!\! \footnotesize Additive noise standard deviations & \footnotesize $\!\!\sigma_{\omega},\sigma_{\omega^\prime},\sigma_{\omega^{\prime\prime}}\!\!$ &  \footnotesize $10^{-7}\,[\text{A/Hz}]$\\
\hline
\end{tabular}
\medskip
\end{center}
\end{table}
\begin{figure}[t!]
\centering
\pstool[scale=0.55]{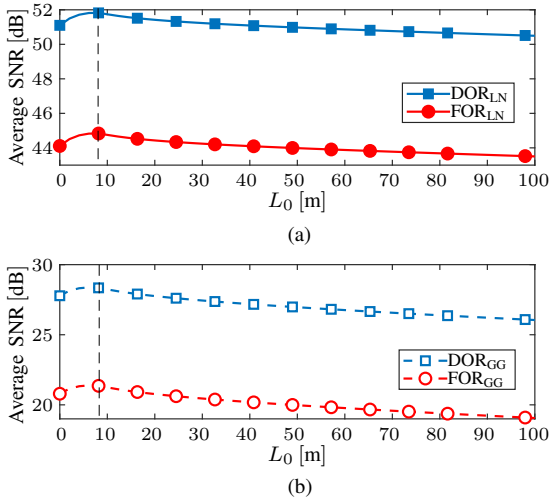}{
\psfrag{Average-SNR-[dB]}{\hspace{-0.265cm}\footnotesize Average SNR $\![\text{dB}]$}
\psfrag{L0-[m]}{\hspace{-0.1cm}\footnotesize $L_0\,[\text{m}]$}
\psfrag{(a)}{\hspace{0.0cm}\footnotesize (a)}
\psfrag{(b)}{\hspace{0.0cm}\footnotesize (b)}
\psfrag{FOR-LN1}{\hspace{-0.05cm}\scriptsize $\text{FOR}_\text{LN}$}
\psfrag{FOR-GG1}{\hspace{-0.05cm}\scriptsize $\text{FOR}_\text{GG}$}
\psfrag{DOR-LN}{\hspace{-0.05cm}\scriptsize $\text{DOR}_\text{LN}$}
\psfrag{DOR-GG}{\hspace{-0.05cm}\scriptsize $\text{DOR}_\text{GG}$}
\psfrag{0}{\hspace{-0.0cm}\scriptsize $0$}
\psfrag{10}{\hspace{-0.08cm}\scriptsize $10$}
\psfrag{20}{\hspace{-0.08cm}\scriptsize $20$}
\psfrag{25}{\hspace{-0.08cm}\scriptsize $25$}
\psfrag{30}{\hspace{-0.08cm}\scriptsize $30$}
\psfrag{40}{\hspace{-0.08cm}\scriptsize $40$}
\psfrag{44}{\hspace{-0.08cm}\scriptsize $44$}
\psfrag{46}{\hspace{-0.08cm}\scriptsize $46$}
\psfrag{48}{\hspace{-0.08cm}\scriptsize $48$}
\psfrag{50}{\hspace{-0.08cm}\scriptsize $50$}
\psfrag{52}{\hspace{-0.08cm}\scriptsize $52$}
\psfrag{60}{\hspace{-0.08cm}\scriptsize $60$}
\psfrag{70}{\hspace{-0.08cm}\scriptsize $70$}
\psfrag{80}{\hspace{-0.08cm}\scriptsize $80$}
\psfrag{90}{\hspace{-0.08cm}\scriptsize $90$}
\psfrag{100}{\hspace{-0.08cm}\scriptsize $100$}
}
\vspace{-0.2cm}
\caption{The average SNR over the assumed $1\,[\text{km}]$ distance between two adjacent FSO-BSs versus $L_0$ for (a) weak turbulence and (b) moderate-to-strong turbulence conditions.}
\label{Fig:Fig.4}
\end{figure}

\subsection{Results and Discussions}
In Fig.~\ref{Fig:Fig.4}, we show that there exists an optimal value of $L_0$ by plotting the average SNR over an assumed $1\,[\text{km}]$ distance between two adjacent FSO-BSs. For the considered conditions, we observe that $L_0\!\simeq\!8$ is optimal for all scenarios. 

Fig.~\ref{Fig:Fig.5} depicts the network's outage probability versus the average SNR for the two considered turbulence conditions. In this figure, we compare the performances of the proposed systems based on the FOR and DOR scenarios with traditional relay systems. All results are compared with $500$-iteration Monte-Carlo simulations. It is shown that the DOR scenario offers the highest performance compared to the others, where the FOR scenario decreases the network's outage probability with respect to the relay system.
The reason is that, in the DOR scenario, the incident FSO beams at all elements of every RIS are reflected in a concentrated direction, which boosts the received signal power at the HST's detector. However, in the FOR scenario, the incident FSO beams at each RIS cell are reflected in a dedicated direction, independent of the direction of other cells.
Conventionally, we assume that only a limited number of RIS elements reflect the incident FSO beams in one direction in the case of the relay system. If we assume to have a generic passive FSO detector and transmitter instead of each RIS, the number of RIS elements required for simulating the detector's aperture is $(\alpha_p/\alpha_r)^2$. Finally, it is verified that all systems have lower outage probabilities under weak turbulence conditions as opposed to the moderate-to-strong ones.
As clearly observed from Fig.~\ref{Fig:Fig.5}, our derived closed-form formulas provide an identical match to simulation results. This conclusion is extendable to the rest of the figures.
 
\begin{figure}[t!]
\centering
\pstool[scale=0.55]{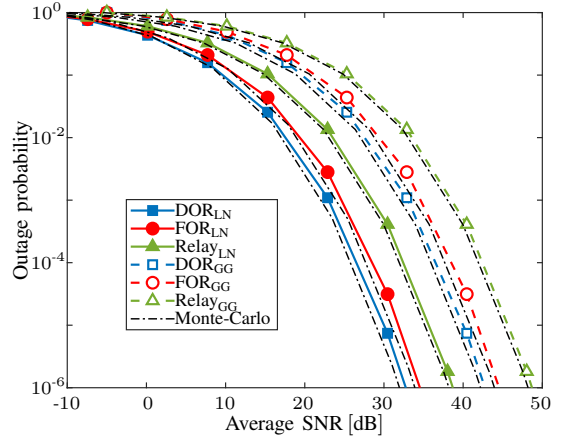}{
\psfrag{SNR-[dB]}{\hspace{-0.62cm}\footnotesize Average SNR $\![\text{dB}]$}
\psfrag{Outage-probability}{\hspace{-0.2cm}\footnotesize Outage probability}
\psfrag{FOR-LN}{\hspace{-0.05cm}\scriptsize $\text{FOR}_\text{LN}$}
\psfrag{FOR-GG}{\hspace{-0.05cm}\scriptsize $\text{FOR}_\text{GG}$}
\psfrag{DOR-LN}{\hspace{-0.05cm}\scriptsize $\text{DOR}_\text{LN}$}
\psfrag{DOR-GG}{\hspace{-0.05cm}\scriptsize $\text{DOR}_\text{GG}$}
\psfrag{Relay-LN}{\hspace{-0.05cm}\scriptsize $\text{Relay}_\text{LN}$}
\psfrag{Relay-GG}{\hspace{-0.05cm}\scriptsize $\text{Relay}_\text{GG}$}
\psfrag{Monte-Carlo123}{\hspace{-0.05cm}\scriptsize Monte-Carlo}
\psfrag{-10}{\hspace{-0.05cm}\scriptsize -$10$}
\psfrag{0}{\hspace{-0.00cm}\scriptsize $0$}
\psfrag{10}{\hspace{-0.06cm}\scriptsize $10$}
\psfrag{20}{\hspace{-0.06cm}\scriptsize $20$}
\psfrag{25}{\hspace{-0.08cm}\scriptsize $25$}
\psfrag{30}{\hspace{-0.06cm}\scriptsize $30$}
\psfrag{40}{\hspace{-0.06cm}\scriptsize $40$}
\psfrag{50}{\hspace{-0.06cm}\scriptsize $50$}
\psfrag{0.4}{\hspace{-0.14cm}\scriptsize $10^{\text{-}6}$}
\psfrag{0.6}{\hspace{-0.14cm}\scriptsize $10^{\text{-}4}$}
\psfrag{0.8}{\hspace{-0.14cm}\scriptsize $10^{\text{-}2}$}
\psfrag{1.0}{\hspace{-0.14cm}\scriptsize $10^0$}
}
\caption{The network's outage probability versus the average SNR for different coverage scenarios and turbulence conditions.}
\label{Fig:Fig.5}
\end{figure}

Fig.~\ref{Fig:Fig.6} illustrates the diameters of the coverage areas served by the proposed access systems under the two considered turbulence conditions. In this regard, the average SNR at the HST's detector is plotted versus $L_b\!+\!L_m$ for the sample $b$th FSO-BS. It is shown that the DOR scenario extends the coverage area with respect to both FOR and traditional relay systems. Besides, the FOR scenario offers a wider coverage area compared to the relay system. In detail, for the average SNR of $30\,[\text{dB}]$, the diameter of the coverage area offered by the DOR scenario is $60\%$ and $304\%$, under the weak turbulence, and $145\%$ and $277\%$, under the moderate-to-strong turbulence, larger than those of the FOR and relay systems, respectively. By increasing the coverage area served by each FSO-BS and its paired RIS, the numbers of required FSO-BSs and handovers decrease for a fixed railway distance. The dotted area in Fig.~\ref{Fig:Fig.6} depicts the region directly covered by the $b$th FSO-BS with the diameter of $L_b$. It is inferred that, contrary to the FOR and relay systems, the DOR scenario results in higher average SNR at the end of $L_b$ and beginning of $L_m$, thanks to concentrating all dispersed beams on one point. 
The results of Fig.~\ref{Fig:Fig.6} verify that the assumed distance between two adjacent FSO-BSs for plotting Fig.~\ref{Fig:Fig.4} is feasible.
\begin{figure}[t!]
\centering
\pstool[scale=0.55]{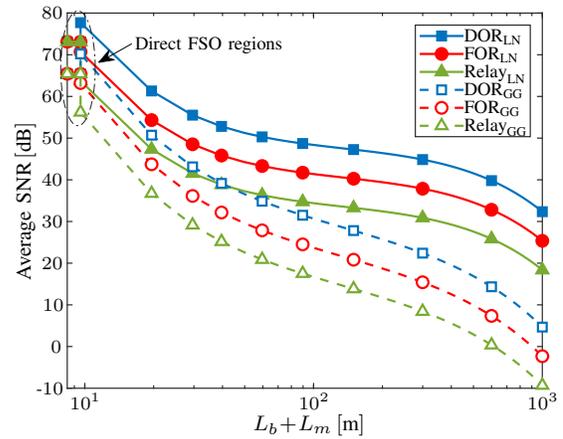}{
\psfrag{SNR-[dB]}{\hspace{-0.62cm}\footnotesize Average SNR $\![\text{dB}]$}
\psfrag{Distance-[m]}{\hspace{-0.1cm}\footnotesize $L_b\!+\!L_m \,[\text{m}]$}
\psfrag{DIrect-FSO-connections}{\hspace{-0.05cm}\scriptsize Direct FSO regions}
\psfrag{FOR-LN}{\hspace{-0.05cm}\scriptsize $\text{FOR}_\text{LN}$}
\psfrag{FOR-GG}{\hspace{-0.05cm}\scriptsize $\text{FOR}_\text{GG}$}
\psfrag{DOR-LN}{\hspace{-0.05cm}\scriptsize $\text{DOR}_\text{LN}$}
\psfrag{DOR-GG}{\hspace{-0.05cm}\scriptsize $\text{DOR}_\text{GG}$}
\psfrag{Relay-LN}{\hspace{-0.05cm}\scriptsize $\text{Relay}_\text{LN}$}
\psfrag{Relay-GG}{\hspace{-0.05cm}\scriptsize $\text{Relay}_\text{GG}$}
\psfrag{-10}{\hspace{-0.05cm}\scriptsize -$10$}
\psfrag{0}{\hspace{-0.03cm}\scriptsize $0$}
\psfrag{10}{\hspace{-0.05cm}\scriptsize $10$}
\psfrag{20}{\hspace{-0.05cm}\scriptsize $20$}
\psfrag{30}{\hspace{-0.05cm}\scriptsize $30$}
\psfrag{40}{\hspace{-0.05cm}\scriptsize $40$}
\psfrag{50}{\hspace{-0.05cm}\scriptsize $50$}
\psfrag{60}{\hspace{-0.05cm}\scriptsize $60$}
\psfrag{70}{\hspace{-0.05cm}\scriptsize $70$}
\psfrag{80}{\hspace{-0.05cm}\scriptsize $80$}
\psfrag{101}{\hspace{-0.05cm}\scriptsize $10^{1}$}
\psfrag{102}{\hspace{-0.05cm}\scriptsize $10^{2}$}
\psfrag{103}{\hspace{-0.05cm}\scriptsize $10^{3}$}
}
\caption{The average SNR versus the coverage diameter for different coverage scenarios and turbulence conditions.}
\label{Fig:Fig.6}
\end{figure}

Fig.~\ref{Fig:Fig.7} presents the network's achievable data rates versus the diameters of the coverage areas. This figure's attribute is similar to that of Fig.~\ref{Fig:Fig.6} under different serving scenarios and turbulence conditions. Concretely, the DOR scenario averagely offers $16\%$ and $45\%$ higher data rates with respect to the FOR and relay systems, respectively, under weak turbulence conditions. Likewise, under moderate-to-strong conditions, the DOR scenario sequentially provides $26\%$ and $63\%$ higher data rates compared to the FOR and relay systems. Meanwhile, the coverage analyses for Fig.~\ref{Fig:Fig.6} can be easily applied for Fig.~\ref{Fig:Fig.7}.
\begin{figure}[t!]
\centering
\pstool[scale=0.55]{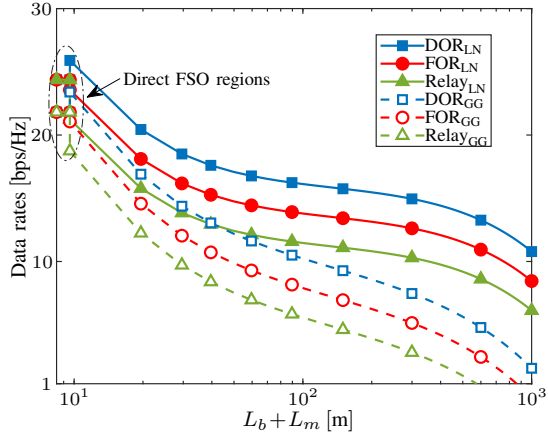}{
\psfrag{Data-rates-[bps/Hz]}{\hspace{-0.18cm}\footnotesize Data rates $\![\text{bps/Hz}]$}
\psfrag{Distance-[m]}{\hspace{-0.1cm}\footnotesize $L_b\!+\!L_m \,[\text{m}]$}
\psfrag{DIrect-FSO-connections}{\hspace{-0.05cm}\scriptsize Direct FSO regions}
\psfrag{FOR-LN}{\hspace{-0.05cm}\scriptsize $\text{FOR}_\text{LN}$}
\psfrag{FOR-GG}{\hspace{-0.05cm}\scriptsize $\text{FOR}_\text{GG}$}
\psfrag{DOR-LN}{\hspace{-0.05cm}\scriptsize $\text{DOR}_\text{LN}$}
\psfrag{DOR-GG}{\hspace{-0.05cm}\scriptsize $\text{DOR}_\text{GG}$}
\psfrag{Relay-LN}{\hspace{-0.05cm}\scriptsize $\text{Relay}_\text{LN}$}
\psfrag{Relay-GG}{\hspace{-0.05cm}\scriptsize $\text{Relay}_\text{GG}$}
\psfrag{1}{\hspace{-0.03cm}\scriptsize $1$}
\psfrag{10}{\hspace{-0.05cm}\scriptsize $10$}
\psfrag{20}{\hspace{-0.05cm}\scriptsize $20$}
\psfrag{30}{\hspace{-0.05cm}\scriptsize $30$}
\psfrag{101}{\hspace{-0.05cm}\scriptsize $10^{1}$}
\psfrag{102}{\hspace{-0.05cm}\scriptsize $10^{2}$}
\psfrag{103}{\hspace{-0.05cm}\scriptsize $10^{3}$}
}
\caption{The network's achievable data rates versus the coverage diameter for different coverage scenarios and turbulence conditions.}
\label{Fig:Fig.7}
\end{figure}

\begin{figure}[t!]
\centering
\subfloat[]{
\pstool[scale=0.55]{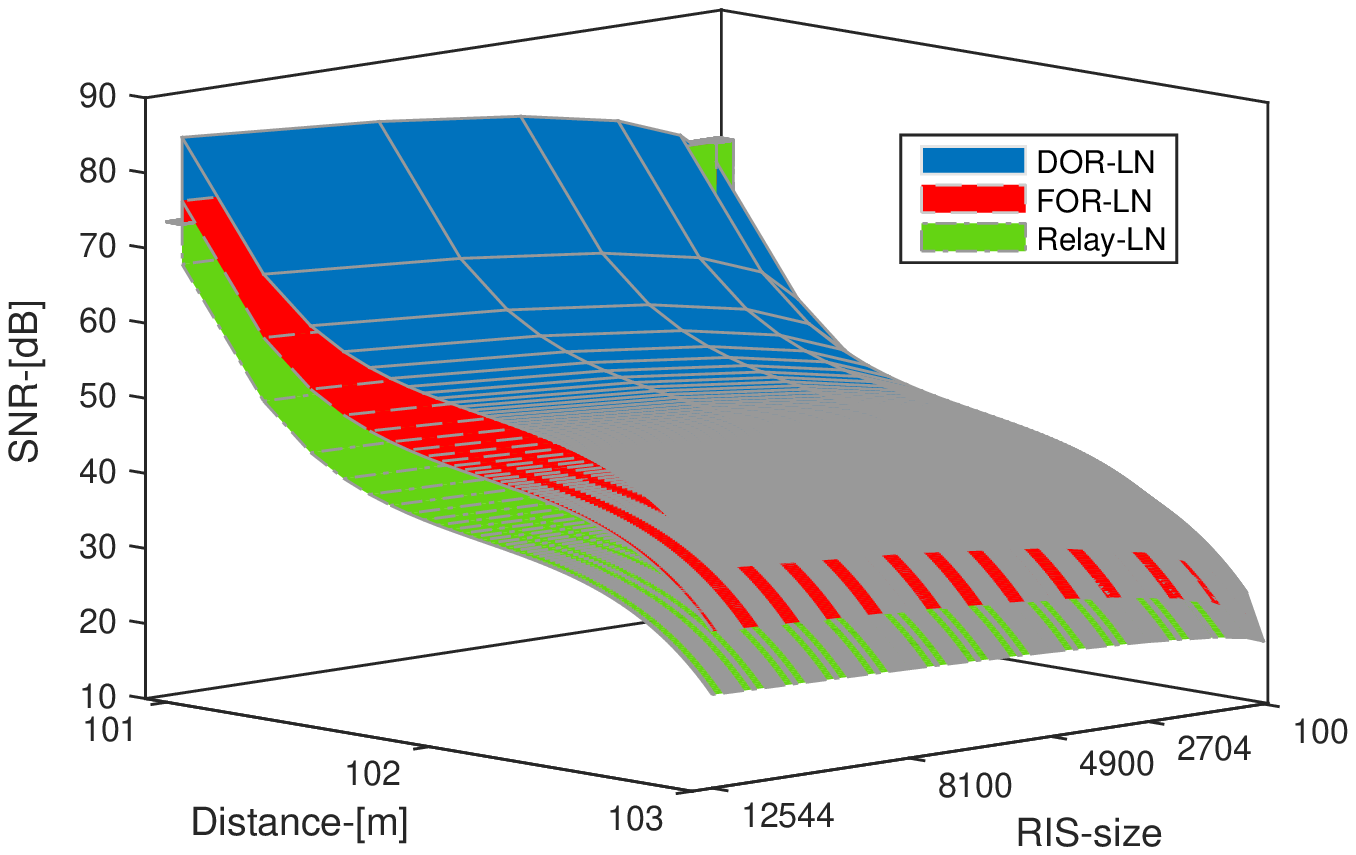}{
\psfrag{SNR-[dB]}{\hspace{-0.72cm}\footnotesize Average SNR [dB]}
\psfrag{Distance-[m]}{\hspace{0.1cm}\footnotesize $L_b\!+\!L_m\,[\text{m}]$}
\psfrag{RIS-size}{\hspace{-0.35cm}\footnotesize RIS size}
\psfrag{FOR-LN}{\hspace{-0.05cm}\scriptsize $\text{FOR}_\text{LN}$}
\psfrag{FOR-GG}{\hspace{-0.05cm}\scriptsize $\text{FOR}_\text{GG}$}
\psfrag{DOR-LN}{\hspace{-0.05cm}\scriptsize $\text{DOR}_\text{LN}$}
\psfrag{DOR-GG}{\hspace{-0.05cm}\scriptsize $\text{DOR}_\text{GG}$}
\psfrag{Relay-LN}{\hspace{-0.05cm}\scriptsize $\text{Relay}_\text{LN}$}
\psfrag{Relay-GG}{\hspace{-0.05cm}\scriptsize $\text{Relay}_\text{GG}$}

\psfrag{10}{\hspace{-0.05cm}\scriptsize $10$}
\psfrag{20}{\hspace{-0.05cm}\scriptsize $20$}
\psfrag{30}{\hspace{-0.05cm}\scriptsize $30$}
\psfrag{40}{\hspace{-0.05cm}\scriptsize $40$}
\psfrag{50}{\hspace{-0.05cm}\scriptsize $50$}
\psfrag{60}{\hspace{-0.05cm}\scriptsize $60$}
\psfrag{70}{\hspace{-0.05cm}\scriptsize $70$}
\psfrag{80}{\hspace{-0.05cm}\scriptsize $80$}
\psfrag{90}{\hspace{-0.05cm}\scriptsize $90$}

\psfrag{101}{\hspace{-0.03cm}\scriptsize $10^1$}
\psfrag{102}{\hspace{-0.08cm}\scriptsize $10^2$}
\psfrag{103}{\hspace{-0.08cm}\scriptsize $10^3$}

\psfrag{100}{\hspace{-0.18cm}\scriptsize $100$}
\psfrag{2704}{\hspace{-0.13cm}\scriptsize $2704$}
\psfrag{4900}{\hspace{-0.16cm}\scriptsize $4900$}
\psfrag{8100}{\hspace{-0.13cm}\scriptsize $8100$}
\psfrag{12544}{\hspace{-0.13cm}\scriptsize $12544$}
}
}
\vspace{-0.3cm}
\hfill
\subfloat[]{
\pstool[scale=0.55]{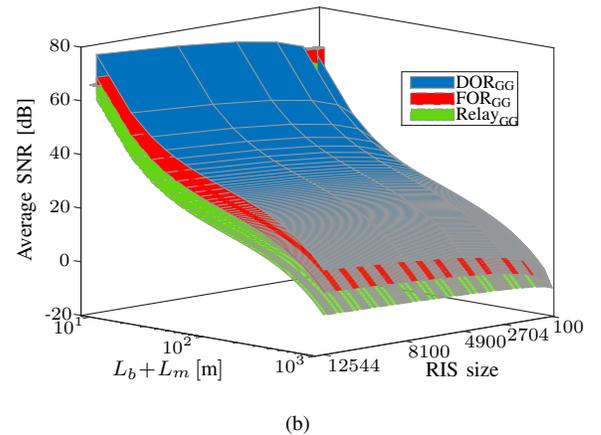}{
\psfrag{SNR-[dB]}{\hspace{-0.72cm}\footnotesize Average SNR [dB]}
\psfrag{Distance-[m]}{\hspace{0.1cm}\footnotesize $L_b\!+\!L_m\,[\text{m}]$}
\psfrag{RIS-size}{\hspace{-0.35cm}\footnotesize RIS size}
\psfrag{FOR-LN}{\hspace{-0.05cm}\scriptsize $\text{FOR}_\text{GG}$}
\psfrag{DOR-LN}{\hspace{-0.05cm}\scriptsize $\text{DOR}_\text{GG}$}
\psfrag{Relay-LN}{\hspace{-0.05cm}\scriptsize $\text{Relay}_\text{GG}$}

\psfrag{-20}{\hspace{-0.05cm}\scriptsize -$20$}
\psfrag{0}{\hspace{-0.025cm}\scriptsize $0$}
\psfrag{20}{\hspace{-0.05cm}\scriptsize $20$}
\psfrag{40}{\hspace{-0.05cm}\scriptsize $40$}
\psfrag{60}{\hspace{-0.05cm}\scriptsize $60$}
\psfrag{80}{\hspace{-0.05cm}\scriptsize $80$}

\psfrag{101}{\hspace{-0.03cm}\scriptsize $10^1$}
\psfrag{102}{\hspace{-0.08cm}\scriptsize $10^2$}
\psfrag{103}{\hspace{-0.08cm}\scriptsize $10^3$}

\psfrag{100}{\hspace{-0.18cm}\scriptsize $100$}
\psfrag{2704}{\hspace{-0.13cm}\scriptsize $2704$}
\psfrag{4900}{\hspace{-0.16cm}\scriptsize $4900$}
\psfrag{8100}{\hspace{-0.13cm}\scriptsize $8100$}
\psfrag{12544}{\hspace{-0.13cm}\scriptsize $12544$}
}}
\caption{The average SNR versus the coverage diameter and RIS size under (a) weak turbulence and (b) moderate-to-strong turbulence FSO channels.}
\label{Fig:Fig.8}
\end{figure}
Fig.~\ref{Fig:Fig.8} depicts the average SNR versus the coverage diameter and a new analysis dimension, the RIS size. It is revealed that increasing the number of RIS elements, i.e., $N_k\!\times\!N_\ell\!\times\!N_m$, would monotonically raise the average SNR for a given coverage diameter. However, the average SNR becomes saturated for gigantic RIS sizes. For plotting this figure, it is assumed that the divergence angle is adjusted based on the RIS size. The reason is that the incident FSO beam from every FSO-BS must cover the whole area of the paired RIS. In other words, the larger RIS sizes, the wider divergence angles, hence more expensive FSO expanders are required.

\subsection{System Design Framework}
A system design framework for practical deployments is suggested here by taking advantage of the provided numerical results and discussions. To this end, we set the outage probability at the HST receiver to be $10^{-3}$. According to Fig.~\ref{Fig:Fig.5}, the required SNR value for each scenario and turbulence condition is deduced. Then, the diameter of the coverage area supported by each FSO-BS and the achievable data rates at the HST are found in Fig.~\ref{Fig:Fig.6} and Fig.~\ref{Fig:Fig.7}, respectively. Table~\ref{Tab:Tab.2} summarizes the proposed design parameters for the assumed railway length of $100\,[\text{km}]$. 
\noindent
\renewcommand{\arraystretch}{1}
\begin{table}[t!]
\begin{center}
\caption{A System Design Sample.}\label{Tab:Tab.2}
\begin{tabular}{| l | c | c | c | c | c | c |}
\cline{2-7}
\multicolumn{1}{c|}{\small}& \multicolumn{2}{c|}{\small FOR} & \multicolumn{2}{c|}{\small DOR} & \multicolumn{2}{c|}{\small Relay}\\
\cline{2-7}
\multicolumn{1}{c|}{\small} & \footnotesize LN & \footnotesize GG & \footnotesize LN & \footnotesize GG & \footnotesize LN & \footnotesize GG\\
\hline
\hline
\multicolumn{1}{|l|}{\footnotesize \hspace{-0.1cm} $P_{out}$} & \multicolumn{6}{c|}{\footnotesize $10^{-3}$}\\
\hline
\footnotesize \hspace{-0.1cm} $\text{Required SNR}\,[\text{dB}]$ & \footnotesize $25$ & \footnotesize $35$ & \footnotesize $22$ & \footnotesize $33$ & \footnotesize $27$ & \footnotesize $37$\\
\hline
% \footnotesize \hspace{-0.1cm} $\text{Min. capacity}\,[\text{bps/Hz}]$ & \footnotesize $8.3$ & \footnotesize $11.6$ & \footnotesize $7.3$ & \footnotesize $11$ & \footnotesize $8.9$ & \footnotesize $12.3$\\
% \hline
\footnotesize \hspace{-0.1cm} $L_b\!+\!L_m\,[\text{m}]$ & \footnotesize $\!\!1000\!\!$ & \footnotesize $28$ & \footnotesize $\!\!1130\!\!$ & \footnotesize $88$ & \footnotesize $325$ & \footnotesize $20$\\
\hline
\footnotesize \hspace{-0.1cm} Number of FSO-BSs ($B$)\scriptsize*& \footnotesize $100$ & \footnotesize $\!\!3572\!\!$ & \footnotesize $89$ & \footnotesize $114$ & \footnotesize $308$ & \footnotesize $\!\!5000$\!\!\\
\hline
\multicolumn{7}{l}{\scriptsize *It is assumed that the length of the railway line is $100\,[\text{km}]$.}\\
\end{tabular}
\medskip
\end{center}
\end{table}

\begin{figure}[t!]
\centering
\pstool[scale=0.55]{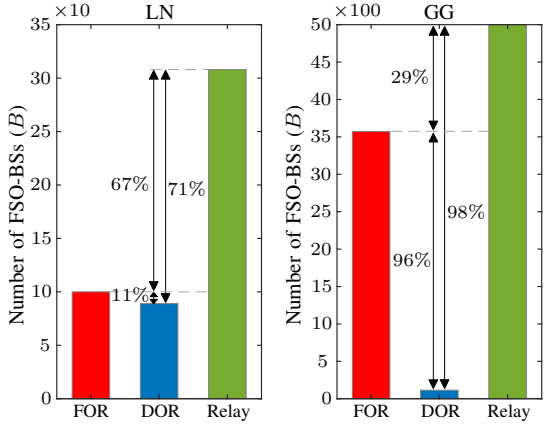}{
\psfrag{Number-of-FSO-BSs(B)}{\hspace{-0.51cm}\footnotesize Number of FSO-BSs ($B$)}
\psfrag{Distance-[m]}{\hspace{-0.1cm}\footnotesize $L_b\!+\!L_m \,[\text{m}]$}
\psfrag{DIrect-FSO-connections}{\hspace{-0.05cm}\scriptsize Direct FSO connections}
\psfrag{FOR}{\hspace{-0.06cm}\scriptsize FOR}
\psfrag{DOR}{\hspace{-0.06cm}\scriptsize DOR}
\psfrag{Relay}{\hspace{-0.06cm}\scriptsize Relay}
\psfrag{LN}{\hspace{-0.06cm}\footnotesize LN}
\psfrag{GG}{\hspace{-0.06cm}\footnotesize GG}
\psfrag{0}{\hspace{-0.06cm}\scriptsize $0$}
\psfrag{50}{\hspace{0.025cm}\scriptsize $5$}
\psfrag{100}{\hspace{-0.0cm}\scriptsize $10$}
\psfrag{150}{\hspace{-0.0cm}\scriptsize $15$}
\psfrag{200}{\hspace{-0.0cm}\scriptsize $20$}
\psfrag{250}{\hspace{-0.0cm}\scriptsize $25$}
\psfrag{300}{\hspace{-0.0cm}\scriptsize $30$}
\psfrag{350}{\hspace{-0.0cm}\scriptsize $35$}

\psfrag{500}{\hspace{0.0925cm}\scriptsize $5$}
\psfrag{1000}{\hspace{0.09cm}\scriptsize $10$}
\psfrag{1500}{\hspace{0.09cm}\scriptsize $15$}
\psfrag{2000}{\hspace{0.09cm}\scriptsize $20$}
\psfrag{2500}{\hspace{0.09cm}\scriptsize $25$}
\psfrag{3000}{\hspace{0.09cm}\scriptsize $30$}
\psfrag{3500}{\hspace{0.09cm}\scriptsize $35$}
\psfrag{4000}{\hspace{0.09cm}\scriptsize $40$}
\psfrag{4500}{\hspace{0.09cm}\scriptsize $45$}
\psfrag{5000}{\hspace{0.09cm}\scriptsize $50$}

\psfrag{*10}{\hspace{0.09cm}\scriptsize $\times10$}
\psfrag{*100}{\hspace{0.09cm}\scriptsize $\times100$}

\psfrag{11d}{\hspace{-0.11cm}\scriptsize $11\%$}
\psfrag{67.5d}{\hspace{-0.0cm}\scriptsize $67\%$}
\psfrag{71d}{\hspace{-0.05cm}\scriptsize $71\%$}
\psfrag{96d}{\hspace{-0.135cm}\scriptsize $96\%$}
\psfrag{98d}{\hspace{-0.042cm}\scriptsize $98\%$}
\psfrag{28.5d}{\hspace{0.012cm}\scriptsize $29\%$}
}
\vspace{-0.2cm}
\caption{The number of the required FSO-BSs for the assumed railway length of $100\,[\text{km}]$.}
\label{Fig:Fig.9}
\end{figure}
Fig.~\ref{Fig:Fig.9} shows the number of needed FSO-BSs for the assumed railway setup under different scenarios and turbulence conditions, as given in Table~\ref{Tab:Tab.2}. It is verified that the DOR strategy requires much simpler infrastructures with less handover frequency compared to the other systems, at the cost of demanding a train tracking procedure. In addition, the FOR strategy requires fewer FSO-BSs, hence fewer handover processes, with respect to the traditional relay system.

\section{Conclusion}\label{Sec:Sec5}
In this paper, we proposed a novel FSO access setup based on the RIS technology to enhance the performance of the existing access networks which provide broadband Internet access for HSTs.
We firstly derived the statistical expressions of direct and RIS-assisted FSO channels under weak and moderate-to-strong fading conditions. Then, the average SNR and outage probability formulas at the HST were computed for the FOR and DOR coverage scenarios.
Through the numerical results, it was shown that the proposed access network offers up to around $44\%$ higher data rates and $240\%$ wider coverage area served by each FSO-BS compared to the relay-assisted alternative. 
The increase of the FSO-BS coverage on average reduces $67\%$ the number of FSO-BSs required to support a given railway distance. Decreasing the number of FSO-BSs in an access network would decrease the handover frequency and CAPEX, as well.
The main results were verified by Monte-Carlo simulations.
Finally, a system design framework for practical deployments was suggested based on the achieved results and assumed design sample. The calculated values on the design table verified our results that the proposed setup boosts the performances of the existing alternatives.

%%====> Appendixes <===%%
\appendices
\section{Proof of Theorem \ref{The:The1}} \label{App:App1}
Using (\ref{Eq:Eq4}) and \cite[Eq.~(12)]{farid2007outage}, the p.d.f of $h_{k\ell}$ is obtained as
\begin{align}\label{Eq:EqApp1} \nonumber
    &f_{h_{k\ell}}(h_{k\ell}) \!=\! \int\! f_{h_{k\ell}|h_{t,k\ell}}\!\left(h_{k\ell}|h_{t,k\ell}\right)f_{h_{t,k\ell}}(h_{t,k\ell})\, dh_{t,k\ell}\\ \nonumber
    &=\! \int_{\frac{h_{k\ell}}{\mathcal{C}_a}}^{\infty} \frac{1}{h_{p,k\ell}h_{t,k\ell}} f_{h_{g,k\ell}}\!\!\left(\!\frac{h_{k\ell}}{h_{p,k\ell}h_{t,k\ell}}\!\right)f_{h_{t,k\ell}}(h_{t,k\ell})\, dh_{t,k\ell}\\ \nonumber
    &=\! \dfrac{\sqrt{\varpi_{k\ell}}}{2\sqrt{\pi}h_{k\ell}}\left(\! \dfrac{h_{k\ell}}{\mathcal{C}_a} \!\right)^{\!\!\varpi_{k\ell}} \\
    &~~\times\! \int_{\frac{h_{k\ell}}{\mathcal{C}_a}}^{\infty} \left[ \ln\!\left(\! \dfrac{\mathcal{C}_ah_{t,k\ell}}{h_{k\ell}} \!\right) \!\right]^{\!-1/2} \!\!\dfrac{1}{h_{t,k\ell}^{\varpi_{k\ell}}} f_{h_{t,k\ell}}(h_{t,k\ell})\, dh_{t,k\ell},
    % &\!=\! \dfrac{\sqrt{\varpi_{k\ell}}}{4\pi\sqrt{2\sigma_{k\ell}^2}} \left(\! \dfrac{h_{k\ell}}{\mathcal{C}_a} \!\right)^{\!\!\varpi_{k\ell}}  \\ \nonumber
    % &\times \int_{{h_{k\ell}}/{\mathcal{C}_a}}^{\infty} \left[ \ln\!\left(\! \dfrac{\mathcal{C}_a h_{t,k\ell}}{h_{k\ell}} \!\right) \!\right]^{\!-1/2} \dfrac{1}{h_{t,k\ell}^{\varpi_{k\ell}\!+\!1}} \operatorname{exp} \!\left(\!\!- \dfrac{\left(\ln(h_{t,k\ell}) \!-\! 2\mu_{k\ell}\right)^2}{8 \sigma_{k\ell}^2} \!\right)\, dh_{t,k\ell} \\ 
\end{align}
where $\mathcal{C}_a\!=\!h_{0,k\ell}h_{p,k\ell}$. Now, by inserting (\ref{Eq:Eq2}) for the weak turbulence into (\ref{Eq:EqApp1}) and some mathematical manipulations, (\ref{Eq:Eq6}) is achieved.

\section{Proof of Theorem \ref{The:The2}} \label{App:App2}
By applying the same technique as in (\ref{Eq:EqApp1}) and using (\ref{Eq:Eq2}) for the moderate-to-strong turbulence, we have
\begin{subequations}
\begin{align} \nonumber
    &f_{h_{k\ell}}(h_{k\ell})
    \!=\! \dfrac{\sqrt{\varpi_{k\ell}}}{\sqrt{\pi}h_{k\ell}} \dfrac{(\alpha_{k\ell}\beta_{k\ell})^\frac{\alpha_{k\ell}\!+\!\beta_{k\ell}}{2}}{\Gamma(\alpha_{k\ell})\Gamma(\beta_{k\ell})} \left(\! \dfrac{h_{k\ell}}{\mathcal{C}_a} \!\right)^{\!\!\varpi_{k\ell}} \\ \nonumber
    &~~\times\! \int_{\frac{h_{k\ell}}{\mathcal{C}_a}}^{\infty} \left[ \ln\!\left(\! \dfrac{\mathcal{C}_ah_{t,k\ell}}{h_{k\ell}} \!\right) \!\right]^{\!-1/2} h_{t,k\ell}^{-\!\varpi_{k\ell}\!-\!1} \\ \label{Eq:EqApp2a}
    &~~\times{h_{t,k\ell}^{\frac{1}{2}({\alpha_{k\ell}\!+\!\beta_{k\ell}})}}  K_{\alpha_{k\ell}\!-\!\beta_{k\ell}}\!\Big(\!2 \sqrt{\alpha_{k\ell}\beta_{k\ell}h_{t,k\ell}}\Big) dh_{t,k\ell}\\ \nonumber
    % &=\! \dfrac{\sqrt{\varpi_{k\ell}}}{2\sqrt{\pi}h_{k\ell}} \dfrac{(\alpha_{k\ell}\beta_{k\ell})^{\varpi_{k\ell}}}{\Gamma(\alpha_{k\ell})\Gamma(\beta_{k\ell})} \left(\! \dfrac{h_{k\ell}}{\mathcal{C}_a} \!\right)^{\!\!\varpi_{k\ell}} \\ \nonumber
    % &~~\times\! \int_{\frac{\alpha_{k\ell}\beta_{k\ell}h_{k\ell}}{\mathcal{C}_a}}^{\infty} \left[ \ln\!\left(\! \dfrac{\mathcal{C}_ah_{t,k\ell}}{\alpha_{k\ell}\beta_{k\ell}h_{k\ell}} \!\right) \!\right]^{\!-1/2} h_{t,k\ell}^{-\!\varpi_{k\ell}\!-\!1} \\ \label{Eq:EqApp2b}
    % &~~\times 2{h_{t,k\ell}^{\frac{1}{2}({\alpha_{k\ell}\!+\!\beta_{k\ell}})}}  K_{\alpha_{k\ell}\!-\!\beta_{k\ell}}\!\Big(\!2 \sqrt{h_{t,k\ell}}\Big) dh_{t,k\ell}\\ \nonumber
    &=\! \dfrac{\sqrt{\varpi_{k\ell}}}{2\sqrt{\pi}h_{k\ell}} \dfrac{(\alpha_{k\ell}\beta_{k\ell})^{\varpi_{k\ell}}}{\Gamma(\alpha_{k\ell})\Gamma(\beta_{k\ell})} \left(\! \dfrac{h_{k\ell}}{\mathcal{C}_a} \!\right)^{\!\!\varpi_{k\ell}} \\ \nonumber
    &~~\times\! \int_{\frac{\alpha_{k\ell}\beta_{k\ell}h_{k\ell}}{\mathcal{C}_a}}^{\infty} \left[ \ln\!\left(\! \dfrac{\mathcal{C}_a h_{t,k\ell}}{\alpha_{k\ell}\beta_{k\ell}h_{k\ell}} \!\right) \!\right]^{\!-1/2} h_{t,k\ell}^{-\!\varpi_{k\ell}\!-\!1} \\ \label{Eq:EqApp2c}
    &~~\times \MeijerG*{2}{0}{0}{2}{0}{\alpha_{k\ell}, \beta_{k\ell}\!}{\!h_{t,k\ell}} dh_{t,k\ell}\\ \nonumber
    &\cong\! \dfrac{\sqrt{\varpi_{k\ell} {\zeta}}}{2\sqrt{\pi}h_{k\ell}} \dfrac{(\alpha_{k\ell}\beta_{k\ell})^{\varpi_{k\ell}}}{\Gamma(\alpha_{k\ell})\Gamma(\beta_{k\ell})} \left(\! \dfrac{h_{k\ell}}{\mathcal{C}_a} \!\right)^{\!\!\varpi_{k\ell}} \left(\!\dfrac{\alpha_{k\ell}\beta_{k\ell}h_{k\ell}}{\mathcal{C}_a}\!\right)^{\!\!1/{2\zeta}\!-\!1} \\ \nonumber
    &~~\times\! \int_{\frac{\alpha_{k\ell}\beta_{k\ell}h_{k\ell}}{\mathcal{C}_a}}^{\infty} \left[ h_{t,k\ell} \!-\!  \dfrac{\alpha_{k\ell}\beta_{k\ell}h_{k\ell}}{\mathcal{C}_a} \!\right]^{\!-1/{2\zeta}} h_{t,k\ell}^{-\!\varpi_{k\ell}\!-\!1} \\ \label{Eq:EqApp2d}
    &~~\times \MeijerG*{2}{0}{0}{2}{0}{\alpha_{k\ell}, \beta_{k\ell}\!}{\!h_{t,k\ell}} dh_{t,k\ell},
    \end{align}
\end{subequations}
where $\mathcal{C}_c\!=\!1/{2\zeta}\!+\!\varpi_{k\ell}\!-\!1$. Next, (\ref{Eq:Eq7}) is obtained after some mathematical computations. Note that, to derive (\ref{Eq:EqApp2c}) and solve (\ref{Eq:EqApp2d}), the similar steps as in \cite[Eq.~(5)]{alheadary2017ber} and \cite[Eq.~(6)]{alheadary2017ber} are passed, respectively. In addition, to derive (\ref{Eq:EqApp2d}), we utilize the approximation $\ln(x)\!\cong\!\zeta x^{1/\zeta}\!-\!\zeta$, for the large enough $\zeta$.

\section{Proof of Theorem \ref{The:The3}} \label{App:App3}
Similar to (\ref{Eq:EqApp1}), the p.d.f of $h_{b}(t)$ is obtained as
\begin{subequations}
\begin{align}\label{Eq:EqApp3} \nonumber
    &f_{h_{b}(t)}(h_{b}(t)) \!=\! \frac{\xi^2(h_b(t))^{\xi^2-1}}{\mathcal{C}_d^{\xi^2}}\\ 
    % &~~ \int\! f_{h_{b}(t)|h_{t,b}(t)}\!\left(h_{b}(t)|h_{t,b}(t)\right)f_{h_{t,b}(t)}(h_{t,b}(t))\, dh_{t,b}(t)\\ \nonumber
    % &=\! \int\!  \frac{1}{h_{p,b}(t)h_{t,b}(t)} f_{h_{g,b}(t)}\!\left(\frac{h_b(t)}{h_{p,b}(t)h_{t,b}(t)}\right)f_{h_{t,b}(t)}(h_{t,b}(t))\, dh_{t,b}(t)\\ \nonumber
    % &=\! \int_{\frac{h_b(t)}{\mathcal{C}_d}}^\infty\!  \frac{1}{h_{p,b}(t)h_{t,b}(t)} \frac{\xi^2}{h_{0,b}^{\xi^2}} \left(\frac{h_b(t)}{h_{p,b}(t)h_{t,b}(t)}\right)^{\xi^2-1} f_{h_{t,b}(t)}(h_{t,b}(t))\, dh_{t,b}(t)\\ \nonumber
    &~~\times\!  \int_{\frac{h_b(t)}{\mathcal{C}_d}}^\infty\!   ({h_{t,b}(t)})^{-\xi^2} f_{h_{t,b}(t)}(h_{t,b}(t))\, dh_{t,b}(t)\\ \nonumber
    &=\!\frac{\xi^2(h_b(t))^{\xi^2-1}}{\mathcal{C}_d^{\xi^2}} \dfrac{2(\alpha_{b}\beta_{b})^{\frac{1}{2}({\alpha_{b}\!+\!\beta_{b}})}}{\Gamma(\alpha_{b})\Gamma(\beta_{b})} \int_{\frac{h_b(t)}{\mathcal{C}_d}}^\infty\! ({h_{t,b}(t)})^{-\xi^2\!-\!1} \\ 
    &~~\times\! ({h_{t,b}(t)})^{\frac{1}{2}({\alpha_{b}\!+\!\beta_{b}})} K_{\alpha_{b}\!-\!\beta_{b}}\!\Big(\!2 \sqrt{\!\alpha_{b}\beta_{b}h_{t,b}(t)}\Big)\, dh_{t,b}(t)\\ \nonumber
    &=\!\frac{\xi^2(h_b(t))^{\xi^2-1}}{\mathcal{C}_d^{\xi^2}} \dfrac{(\alpha_{b}\beta_{b})^{\xi^2}}{\Gamma(\alpha_{b})\Gamma(\beta_{b})} \\ 
    &~~\times\! \int_{\frac{\alpha_{b}\beta_{b}h_b(t)}{\mathcal{C}_d}}^\infty\! ({h_{t,b}(t)})^{-\xi^2\!-\!1} \MeijerG*{2}{0}{0}{2}{0}{\alpha_{b}, \beta_{b}\!}{\!h_{t,b}(t)}\, dh_{t,b}(t).
    % &=\frac{\xi^2(h_b(t))^{\xi^2-1}}{\mathcal{C}_d^{\xi^2}} \dfrac{(\alpha_{b}\beta_{b})^{\xi^2}}{\Gamma(\alpha_{b})\Gamma(\beta_{b})} \MeijerG*{3}{0}{1}{3}{1+\xi^2}{\xi^2, \alpha_{b}, \beta_{b}\!}{\!\frac{\alpha_{b}\beta_{b}h_{b}(t)}{\mathcal{C}_d}},
\end{align}
\end{subequations}
Then, (\ref{Eq:Eq11}) is achieved after some mathematical manipulations.

\section{Proof of Theorem \ref{The:The4}} \label{App:App4}
We firstly derive the p.d.f of $\gamma_b(t)$ then compute its corresponding CDF. In this regard, we define $\gamma_{m,k\ell}\!\triangleq\!\bar{\gamma}_{k\ell} h_{m,k\ell}^2$ and $\gamma_{m,k\ell}^\prime\!\triangleq\!\bar{\gamma}^\prime_{k\ell} h_{m,k\ell}^2$. Besides, we know that the RIS-assisted channels are uncorrelated, hence independent. Thus, we have 
\begin{align}\label{Eq:EqApp5} \nonumber
    &f_{\gamma_b(t)}(\gamma_b(t)) \!=\! f_{\bar{\gamma}_b (h_b(t))^2}(\bar{\gamma}_b (h_b(t))^2) \\
    &~~~=\! \frac{\xi^2 (\gamma_b(t))^{\frac{\xi^2}{2}-1}\,\mathcal{C}_f}{4 \bar{\gamma}_b^{\frac{\xi^2}{2}} \mathcal{C}_d^{\xi^2}} \operatorname{erfc}\! \left(\! \frac{0.5\ln\!\left(\!{\frac{\gamma_b(t)}{\bar{\gamma}_b \mathcal{C}_d^2}}\!\right) \!+\! \mathcal{C}_e}{\sqrt{8 \sigma_{b}^2}}\!\right) \!\mathbb{I}(\mathcal{C}_1),
\end{align}
for the direct FSO channels. Likewise, we have
\begin{align}\label{Eq:EqApp6} \nonumber
    &f_{\gamma_b(t)}(\gamma_b(t)) \!=\! \prod\limits_{\ell=1}^{N_\ell}\prod\limits_{k=1}^{N_k} f_{{\gamma}_{m,k\ell}}({\gamma}_{m,k\ell}) \\ \nonumber
    &=\! \prod\limits_{\ell=1}^{N_\ell}\prod\limits_{k=1}^{N_k} \dfrac{\sqrt{\varpi_{k\ell}}\, \gamma_{m,k\ell}^{\frac{\varpi_{k\ell}}{2}-1}}{16\sqrt{\pi} \bar{\gamma}_{k\ell}^\frac{\varpi_{k\ell}}{2}\mathcal{C}_a^{\varpi_{k\ell}}} \operatorname{erfc} \!\left(\! \frac{0.5\ln\!\left(\!\frac{\gamma_{m,k\ell}}{\bar{\gamma}_{k\ell} \mathcal{C}_a^2} \!\right)\!+\!\mathcal{C}_b}{\sqrt{8 \sigma_{k\ell}^2}} \!\right)\\ \nonumber
    &~~\times\! \Bigg\{\!\! \left[\dfrac{\big(5\!-\!\ln(\mathcal{C}_a)\big) \bar{\gamma}_{k\ell}^\frac{\varpi_{k\ell}}{2}}{\gamma_{m,k\ell}^{ \frac{\varpi_{k\ell}}{2}}}\!+\!\dfrac{\sqrt{8\sigma_{k\ell}^2}}{\sqrt{\pi}}\right] \\ \nonumber
    &~~\times\! \operatorname{exp}\!\left(\!\dfrac{\big(0.5\ln\! \left(\!\frac{\gamma_{m,k\ell}}{\bar{\gamma}_{k\ell}}\!\right)\!+\!\mathcal{C}_b\big)^2 \!-\! \big(0.5\ln\!\left(\!\frac{\gamma_{m,k\ell}}{\bar{\gamma}_{k\ell}}\!\right)\!+\!2\sigma_{k\ell}^2\big)^2   }{8\sigma_{k\ell}^2} \!\right)\\
    &~~+\! \left(\!\ln\!\left(\!\frac{\gamma_{m,k\ell}}{\bar{\gamma}_{k\ell}}\!\right)\!+\!\mathcal{C}_b\!\right) \operatorname{exp}\!\Big(\!2\sigma_{k\ell}^2\varpi_{k\ell}(1\!+\!\varpi_{k\ell})\!\Big) \!\!\Bigg\} \mathbb{I}(\mathcal{C}_2),
\end{align}
for the FOR RIS-assisted FSO channels, while
\begin{align}\label{Eq:EqApp7} \nonumber
    &f_{\gamma_b(t)}(\gamma_b(t)) \!=\!\!\! \prod\limits_{n_m=1}^{N_m}\prod\limits_{\ell=1}^{N_\ell}\prod\limits_{k=1}^{N_k} f_{{\gamma}^\prime_{m,k\ell}}({\gamma}^\prime_{m,k\ell}) \\ \nonumber
    &=\!\!\! \prod\limits_{n_m=1}^{N_m} \prod\limits_{\ell=1}^{N_\ell}\prod\limits_{k=1}^{N_k} \dfrac{\sqrt{\varpi_{k\ell}}\, \gamma_{m,k\ell}^{\prime \frac{\varpi_{k\ell}}{2}-1}}{16\sqrt{\pi} \bar{\gamma}_{k\ell}^{\prime\frac{\varpi_{k\ell}}{2}}\mathcal{C}_a^{\varpi_{k\ell}}} \operatorname{erfc} \!\left(\! \frac{0.5\ln\!\left(\!\frac{\gamma_{m,k\ell}^\prime}{\bar{\gamma}^\prime_{k\ell} \mathcal{C}_a^2} \!\right)\!+\!\mathcal{C}_b}{\sqrt{8 \sigma_{k\ell}^2}} \!\right)\\ \nonumber
    &~~\times\! \Bigg\{\!\! \left[\dfrac{\big(5\!-\!\ln(\mathcal{C}_a)\big) \bar{\gamma}_{k\ell}^{\prime\frac{\varpi_{k\ell}}{2}}}{\gamma_{m,k\ell}^{\prime \frac{\varpi_{k\ell}}{2}}}\!+\!\dfrac{\sqrt{8\sigma_{k\ell}^2}}{\sqrt{\pi}}\right] \\ \nonumber
    &~~\times\! \operatorname{exp}\!\left(\!\dfrac{\big(0.5\ln\! \left(\!\frac{\gamma_{m,k\ell}^\prime}{\bar{\gamma}^\prime_{k\ell}}\!\right)\!+\!\mathcal{C}_b\big)^2 \!-\! \big(0.5\ln\!\left(\!\frac{\gamma_{m,k\ell}^\prime}{\bar{\gamma}^\prime_{k\ell}}\!\right)\!+\!2\sigma_{k\ell}^2\big)^2   }{8\sigma_{k\ell}^2} \!\right)\\
    &~~+\! \left(\!\ln\!\left(\!\frac{\gamma_{m,k\ell}^\prime}{\bar{\gamma}^\prime_{k\ell}}\!\right)\!+\!\mathcal{C}_b\!\right) \operatorname{exp}\!\Big(\!2\sigma_{k\ell}^2\varpi_{k\ell}(1\!+\!\varpi_{k\ell})\!\Big) \!\!\Bigg\} \mathbb{I}(\mathcal{C}_3), 
\end{align}
for the DOR RIS-assisted FSO channels. In (\ref{Eq:EqApp5})--(\ref{Eq:EqApp7}), we utilize (\ref{Eq:Eq6}), (\ref{Eq:Eq10}), and (\ref{Eq:Eq13}).
Now, (\ref{Eq:Eq17})--(\ref{Eq:Eq19}) are obtained from (\ref{Eq:EqApp5})--(\ref{Eq:EqApp7}) pairwise by the use of the conventional CDF formula and the linearity property of integrals.

\section{Proof of Theorem \ref{The:The5}} \label{App:App5}
Similar to Appendix~\ref{App:App4}, the p.d.f expressions are as follows
\begin{align} \label{Eq:EqApp8} \nonumber
    f_{\gamma_b(t)}(\gamma_b(t)) &\!=\! \dfrac{\xi^2 (\gamma_b(t))^{\frac{\xi^2}{2}-1}}{2\bar{\gamma}_b^\frac{\xi^2}{2} \Gamma(\alpha_{b})\Gamma(\beta_{b})} \!\left(\!\dfrac{\alpha_{b}\beta_{b}}{\mathcal{C}_d}\!\right)^{\!\!\xi^2}\\ 
    &~~\times\! \MeijerG*{3}{0}{1}{3}{1+\xi^2}{\xi^2, \alpha_{b}, \beta_{b}\!}{\!\frac{\alpha_{b}\beta_{b}\sqrt{\gamma_{b}(t)}}{\sqrt{\bar{\gamma}_{b}}\,\mathcal{C}_d}} \mathbb{I}(\mathcal{C}_1),
\end{align}
for the direct FSO channels. Besides, we have
\begin{align} \label{Eq:EqApp9} \nonumber
    &f_{\gamma_b(t)}(\gamma_b(t)) \!=\! \prod\limits_{\ell=1}^{N_\ell}\prod\limits_{k=1}^{N_k} \dfrac{\sqrt{\varpi_{k\ell} {\zeta}} \gamma_{m,k\ell}^{\frac{\mathcal{C}_c}{2}-1}}{2\sqrt{\pi} \bar{\gamma}_{k\ell}^{\frac{\mathcal{C}_c}{2}} \Gamma(\alpha_{k\ell})\Gamma(\beta_{k\ell})}  \!\left(\!\dfrac{\alpha_{k\ell}\beta_{k\ell}}{\mathcal{C}_a}\!\right)^{\!\!\mathcal{C}_c} \\
    &~~~~~~~~~\times\!
    \MeijerG*{3}{0}{1}{3}{1+\varpi_{k\ell}}{1+\mathcal{C}_c, \alpha_{k\ell}, \beta_{k\ell}\!}{\!\frac{\alpha_{k\ell}\beta_{k\ell}\sqrt{\gamma_{m,k\ell}}}{\sqrt{\bar{\gamma}_{k\ell}}\,\mathcal{C}_a}} \mathbb{I}(\mathcal{C}_2),
\end{align}
for the FOR RIS-assisted FSO channels, and
\begin{align} \label{Eq:EqApp10} \nonumber
    &\!\!f_{\gamma_b(t)}(\gamma_b(t)) \!=\!\!\! \prod\limits_{n_m=1}^{N_m}\prod\limits_{\ell=1}^{N_\ell}\prod\limits_{k=1}^{N_k} \dfrac{\sqrt{\varpi_{k\ell} {\zeta}} \gamma_{m,k\ell}^{\prime \frac{\mathcal{C}_c}{2}-1}}{2\sqrt{\pi} \bar{\gamma}_{k\ell}^{\prime \frac{\mathcal{C}_c}{2}} \Gamma(\alpha_{k\ell})\Gamma(\beta_{k\ell})} \! \left(\!\dfrac{\alpha_{k\ell}\beta_{k\ell}}{\mathcal{C}_a}\!\right)^{\!\!\mathcal{C}_c}\\
    &\!\!~~~~~~~~~\times
    \!\MeijerG*{3}{0}{1}{3}{1+\varpi_{k\ell}}{1+\mathcal{C}_c, \alpha_{k\ell}, \beta_{k\ell}\!}{\!\frac{\alpha_{k\ell}\beta_{k\ell}\sqrt{\gamma_{m,k\ell}^\prime}}{\sqrt{\bar{\gamma}_{k\ell}^\prime}\,\mathcal{C}_a}} \mathbb{I}(\mathcal{C}_3),
\end{align}
for the DOR RIS-assisted FSO channels. Through (\ref{Eq:EqApp8})--(\ref{Eq:EqApp10}), we apply (\ref{Eq:Eq7}), (\ref{Eq:Eq11}), and (\ref{Eq:Eq13}).
Next, (\ref{Eq:Eq20})--(\ref{Eq:Eq22}) are obtained from (\ref{Eq:EqApp8})--(\ref{Eq:EqApp10}), respectively.

\balance
\bibliographystyle{IEEEtran}
\bibliography{References.bib}

\end{document}